\documentclass[12pt, reqno]{amsart}
\usepackage[utf8]{inputenc}
\usepackage[english]{babel}
\usepackage{amssymb}
\usepackage[foot]{amsaddr}

\usepackage{comment}
\usepackage[normalem]{ulem}

\usepackage{blkarray}
\usepackage{fullpage}              
\usepackage{tikz}             
\usetikzlibrary{arrows.meta,positioning,calc,decorations.pathreplacing}
\usetikzlibrary{plotmarks}
\usetikzlibrary{shapes}
\usetikzlibrary{arrows}
\usepackage{graphicx}
\usepackage{caption}
\usepackage{subcaption}
\usepackage[pdftex]{hyperref}
\hypersetup{
    colorlinks=true,
    linkcolor={MidnightBlue},
    citecolor={magenta},
    urlcolor={MidnightBlue}
}
\usepackage[hang,flushmargin]{footmisc}
\usepackage[ruled,vlined]{algorithm2e}
\usepackage[nameinlink,capitalise]{cleveref}
\usepackage{soul}

\SetKwComment{Comment}{/* }{ */}

\numberwithin{equation}{section}
\theoremstyle{plain}
\newtheorem{theorem}{Theorem}[section]

\newtheorem{lemma}[theorem]{Lemma}
\newtheorem{proposition}[theorem]{Proposition}

\theoremstyle{definition}
\newtheorem{remark}[theorem]{Remark}
\newtheorem{examplex}[theorem]{Example}
\newenvironment{example}
  {\pushQED{\qed}\examplex}
  {\popQED\endexamplex}

\newtheorem{definition}[theorem]{Definition}

\newcommand{\Cc}{{\mathbb C}}

\newcommand{\Rr}{{\mathbb R}}
\newcommand{\Zz}{{\mathbb Z}}
\newcommand{\Pp}{{\mathbb P}}

\newcommand{\cf}{\mathcal{F}}

\newcommand{\cn}{\mathcal{N}}
\newcommand{\ct}{\mathcal{T}}

\newcommand{\ci}{\mathcal{I}}

\newcommand{\cl}{\mathcal{L}}

\DeclareMathOperator{\desc}{desc}

\definecolor{joe}{RGB}{180, 0, 240}

\definecolor{bryson}{RGB}{240,0,0}

\definecolor{cyan}{RGB}{0,240,250}
\definecolor{MidnightBlue}{RGB}{25,25,112}

\title{Identifiability of phylogenetic networks and quintet concordance factors}

\author{Joseph Cummings$^{*,1}$}
\thanks{*Corresponding author}
\address{1 University of Edinburgh}
\author{Maize Curiel}
\author{Bryan Currie$^2$}
\address{2 New Jersey Institute of Technology}
\author{Bryson Kagy$^3$}
\address{3 Texas State University}
\author{Udani Ranasinghe$^4$}
\address{4 University of Hawai`i at M\=anoa}
\author{John A. Rhodes$^5$}
\address{5 University of Alaska Fairbanks}
\date{\today}

\keywords{Phylogenetic network, identifiability, concordance factor}

\begin{document}

\begin{abstract}
    \noindent 
    Several statistical methods of phylogenetic network inference and testing for non-tree-like relationships are based on assessing genomic data through quartet Concordance Factors, the frequencies of 4-taxon topological relationships on gene trees. While such an approach obviates making several undesirable modeling assumptions, it also results in non-identifiability issues for network roots and for small cycles.
    In this work, an algorithm and accompanying Macaulay2 implementation are provided for computing $n$-tet Concordance Factors on any phylogenetic network. 
    We employ this algorithm on quintet Concordance Factors, summarizing 5-taxon gene trees, to explore identifiability of level-1 networks under the Network Multispecies Coalescent model.
    We show some additional network features become identifiable that are not through quartets.  
    As identifiability is a necessary prerequisite to inference by any method, this lays a foundation for future inference work.  
\end{abstract}

\maketitle

\section{Introduction}
\label{sec: intro}

Although phylogenetic trees are the primary means of depicting relationships between species or populations, analyses of genomic data has led to to increasing recognition of their limitations. Growing evidence for hybrid speciation, admixture, or other types of lateral gene flow suggests that \emph{phylogenetic networks} are often needed.

Inference of phylogenetic networks, however, remains challenging. For instance, we now understand that viewing this problem as a 2-stage one, in which a `species tree' is first inferred, and later analyses add additional edges to represent gene flow, can be misleading \cite{Solis-LemusEtAl2016,DinhBanos2025}. The presence of gene flow means a tree model is misspecified, and when common methods for inferring one are applied the result may be a tree not even displayed on the network. Adding additional edges cannot correct such a mistake.

However, attempts
to directly infer a network of arbitrary structure face other problems. The vastness of network space dwarfs that of tree space, so search-based methods can be computationally infeasible without imposing some limit on network structure. To account for incomplete lineage sorting (ILS) the Network Multispecies Coalescent (NMSC) model \cite{Meng2009} is used, introducing a difficult translation between metric information on gene trees (in substitution units) and those on the species network (in coalescent units). Computation of the full likelihood function for more than a handful of taxa adds an additional major burden. Tractable current inference methods focus on a limited class of networks (e.g., level-1, or with at most $k$ reticulations), using gene tree summaries with pseudo-likelihood or combinatorial ideas as part of novel inference approaches. \cite{YuNakhleh2015,Chifman2015b,Solis-LemusEtAl2016, NANUQ_2019, NANUQ+, HoltegrefeEtAl2025, PhyNEST2025}.

\medskip

Underneath all of this, though, is the more fundamental issue of \emph{identifiability}. A model feature or parameter of interest (e.g., the full network, the hybrid node in a cycle on a network, the network's root location) is  said to be \emph{identifiable} if the feature's presence or the parameter's value is determined by a distribution of data the model entails. While identifiability says nothing about a particular inference method, it is a prerequisite for
any method to be statistically consistent.

The counts of displayed 4-taxon unrooted topological trees across a collection of (inferred) gene trees form a data type, the empirical \emph{quartet Concordance Factors} (CFs), that has been found particularly useful in inference of species trees and networks under the coalescent model. Their use avoids rooting gene trees (which may be error prone in the presence of ILS) or relating metric units on gene trees to those on species trees or networks (using poorly-justified \emph{ad hoc} assumptions). It has also been argued that CF-based methods offer  robustness to violations of an assumption of no intragenic recombination \cite{Rivas-GonzalesEtAl2026}.

Network identifiability results for quartet CFs under the NMSC model include unrooted species trees \cite{Allman2011} and certain features of both level-1 and more general networks \cite{SNaQ, Banos2019, AllmanEtAl2024, ECTOBlob}. These theoretical results undergird the software ASTRAL \cite{ASTRAL}, SNaQ \cite{SNaQ}, NANUQ \cite{NANUQ_2019}, NANUQ$^+$ \cite{NANUQ+}, TINNiK \cite{TINNIK}, ECTOBlob \cite{ECTOBlob},  TREE-QMC \cite{TREEQMC} and TOB-QMC \cite{TOBQMC}.

However, some topological features even of level-1 networks are \emph{not} identifiable from quartet CFs. These include the root location, and certain features of small cycles: the presence of 2-cycles, and which node is hybrid in a 3- or 4-cycle.
Investigating quintet CFs, the frequencies of 5-taxon unrooted topological relationships displayed on gene trees, is a natural next step to further understand feature identifiability.
That quintet CFs allow for species tree root identification was shown in \cite{Allman2011}, and developed into an inference method in \cite{Tabatabaee22, Tabatabaee23}, so one might expect an analogous result for networks. The 3-cycle identifiability results of
\cite{AllmanEtAl2024} for quartet CFs were found by studying non-linear relationships between quartet CFs across 5-taxon subtrees, again suggesting quintets offer more information than quartets.

\medskip

This work begins a study of quintet CFs for networks,  focusing on the question of root and 3-cycle identifiability. 
Symbolic computation is a key aspect of this investigation. Although formulas for quintet CFs, in terms of branch lengths and hybridization parameters, were given for species trees in \cite{Allman2011}, formula complexity grows considerably for networks. 
As a first contribution, we therefore present and implement a new algorithm to produce such symbolic formulas. With flexibility for future investigations in mind, it accepts an arbitrary rooted network and an unrooted gene tree on any subset of the network's taxa, returning the probability of that  gene (sub)tree under the NMSC, as a polynomial function of edge probabilities (i.e., transformed edge lengths) and hybridization parameters.

To address identifiability questions for 5-taxon networks, we view these symbolic formulas as parameterizing a portion of a \emph{quintet CF algebraic variety}. If the varieties for two networks differ, then we can distinguish between expected quintet CFs from the two for generic parameter values. We therefore undertake an exhaustive study of all quintet CF varieties for level-1 networks with no 2-cycles, classifying networks according to them. We omit consideration of 2-cycles here (since they lead to infinitely many networks), though a subsequent work will extend these results to fully address them. The amount of computation is reduced by first classifying networks by information known to be identifiable from quartet CFs. Nonetheless, computing some of the varieties is beyond the limits
of current software, so several alternative strategies are used to obtain our results.

\medskip
In \Cref{sec: background} we recall basic terminology and definitions. 
A recursive algorithm for computing symbolic formulas for quintet CFs (or more generally $m$-tet CFs) in terms of network edge probabilities and hybridization parameters is presented in \Cref{sec: algorithm}, with a Macaulay2 \cite{M2} implementation made available. 
\Cref{sec: root identifiability} contains both theorems on non-identifiability of certain  structures on 5-taxon networks, and exhaustive results from computations of quintet CF invariants for all 5-taxon rooted level-1 networks without 2-cycles. We end with final comments in \Cref{sec: discussion}.

\section{The Network Multi-species Coalescent Model and Concordance Factors}
\label{sec: background}
We first describe the setting for our work and  fix terminology.

\subsection{Phylogenetic networks}
A \emph{rooted phylogenetic network} $\cn=\cn^+$ on a taxon set $X$ is a rooted, connected, directed acyclic graph (DAG) whose leaves are bijectively labeled by elements of $X$. We assume the root of the network is also the least stable ancestor (LSA) of the taxa, since network structure above the LSA has no impact on topological gene trees under the coalescent model.
We consider primarily \emph{binary} networks, meaning the root has out-degree 2, leaves have in-degree 1, and all other nodes have degree 3. \emph{Hybrid nodes} are those with in-degree $\ge 2$. \emph{Hybrid edges} are those whose child node is hybrid. All other nodes and edges are \emph{tree nodes} and  \emph{tree edges}. The set of taxa below a node $v$ is denoted $\desc_X(v)$. A binary network is \emph{level-1} if each of its 2-edge-connected components has at most one hybrid node, or equivalently, when the network edges are undirected all of its  cycles are disjoint.

A \emph{semidirected} phylogenetic network is obtained from a directed one by undirecting all tree edges, retaining directions of hybrid edges, and suppressing  the root (if of degree 2).
A \emph{blob} in a network is a maximal 2-edge connected component. The \emph{tree of blobs} of a network is obtained by contracting each blob in its semidirected form to a node, and suppressing any degree 2 nodes, leaving an unrooted tree. For a binary network, multifurcations (nodes of degree $\ge4$) in the tree of blobs represent unspecified blob structures, while nodes of degree 3 may represent either trivial (single node) blobs or more complex one. Blobs with  single in- and out-nodes are lost in the tree of blobs.

\subsection{The coalescent model}\label{subsec:coalescent intro}
To parameterize the \emph{network multispecies coalescent (NMSC) model} with independent inheritance on a fixed binary $\cn$, the network is endowed with \emph{edge lengths} $x_e\in (0,\infty)$ for all edges $e$. 
Additionally, hybrid edges $h_1,h_2$ are given \emph{hybridization parameters} $\lambda_{h_i}\in(0,1)$ with $\lambda_{h_1}+\lambda_{h_2}=1$ if hybrid edges $h_1,h_2$ share a common child hybrid node. This associates to the topological network a stochastic parameter space $\Theta_\cn$, which is an open subset of $\mathbb R^{n_t+3n_{h}/2}$ with $n_t,n_h$ the number of tree and hybrid edges. There are 3 parameters (2 edge lengths, one hybridization parameter) for each pair of hybrid edges. Edge lengths here are in \emph{coalescent units}, a measure of number of generations divided by population size, so that the coalescent rate of 2 lineages traced backward within a common edge (population) of the species network is 1. The coalescence of each such pair of lineages is independent. Lengths of pendant edges may be omitted if only one sample per taxon is taken, since then no coalescence can occur in that edge. Hybridization parameters represent the probabilities that if a lineage, traced backwards in time, reaches a hybrid node it then enters a specific hybrid edge, independently of other lineages. 

Since not all lineages will have necessarily coalesced by the time they reach the network root, we introduce an infinite length edge with its child as the network root so that all lineages will coalesce with probability 1.

The NMSC model assigns to a fixed network $\cn$ (for example, \Cref{fig1}) with stochastic parameters a probability distribution for metric rooted gene trees, as a collection of lineages, one from each taxon, undergoes the coalescent process. For  more detail on the NMSC model see \cite{Meng2009}.

\begin{figure}[ht]
    \centering
    \begin{tikzpicture}[
        >=Stealth,
        every node/.style={font=\small},
        dot/.style={circle,fill=black,inner sep=1.3pt},
        leaf/.style={dot, label=below:{\small #1}},
        int/.style={dot},
    ]

        \node[int] (u0) at (0,4) {};
        \node[int] (u1) at (3,2) {};
        \node[int] (u2) at (-2,3) {};
        \node[int] (u3) at (-1.5,2.5) {};
        \node[int] (u4) at (-1,2) {};
        \node[int] (u5) at (-2,1) {};
        \node[int] (u6) at (-3,2) {};

        \node[leaf=$A$] (A) at (-4,0) {};
        \node[leaf=$B$] (B) at (-2,0) {};
        \node[leaf=$C$] (C) at (0,0) {};
        \node[leaf=$D$] (D) at (2,0) {};
        \node[leaf=$E$] (E) at (4,0) {};

        \draw[->,line width=0.8pt] (u0) -- node[midway, above] {$\ell_1$} (u2);
        \draw[->,line width=0.8pt] (u0) -- node[midway, above] {$\ell_8$} (u1);
        \draw[->,line width=0.8pt] (u2) -- node[midway, right] {$\ell_2$} (u3);
        \draw[->,line width=0.8pt] (u2) -- node[midway, above left] {$\ell_3$} (u6);
        \draw[->,line width=0.8pt] (u6) -- node[midway, below left] {$\ell_5$} (u5);
        \draw[->,line width=0.8pt] (u3) -- node[midway, right] {$\ell_6$} (u4);
        \draw[->,line width=0.8pt] (u4) -- node[midway, below right] {$\ell_7$} (u5);
        \draw[->,line width=0.8pt] (u3) -- node[midway, below right] {$\ell_4$} (u6);
        \draw[->,line width=0.8pt] (u6) -- (A);
        \draw[->,line width=0.8pt] (u5) -- (B);
        \draw[->,line width=0.8pt] (u4) -- (C);
        \draw[->,line width=0.8pt] (u1) -- (D);
        \draw[->,line width=0.8pt] (u1) -- (E);

    \end{tikzpicture}
\caption{A level-2 phylogenetic network with edge length parameters $\ell_i$. If $\lambda_{1}$ is the probability that a lineage enters the edge of length $\ell_3$, then the probability that same lineage enters the edge of length $\ell_4$ is $1-\lambda_1$. Similarly, if $\lambda_2$ is the probability a lineage enters the edge of length $\ell_5$, then the probability of entering the edge of length $\ell_7$ is $1-\lambda_2$.}
\label{fig1} 
\end{figure}

\subsection{Concordance factors}

Marginalizing a gene tree distribution over branch lengths and root locations gives a probability distribution on unrooted topological gene trees. Under the NMSC the fully resolved topological gene trees are exactly those with positive probability. 
Thus for an $n$-taxon network, by fixing some ordering of the $(2n-5)!!$ binary unrooted topological gene trees on $n$ taxa, we have a map 
$$\phi_\cn:\Theta_\cn\to \Delta_{(2n-5)!!-1}$$
from the stochastic parameter to the probability simplex of dimension $(2n-5)!!-1$. 
For a metric network $(\cn,\theta)$ with $\theta \in \Theta_\cn$, the image $\phi_\cn(\theta)$ is the \emph{gene tree concordance factor (CF) vector}. 
If $Y$ is a subset of taxa of size $m \geq 4$, then marginalizing over the remaining taxa in $X \smallsetminus Y$, one obtains 
$$\phi^Y_\cn:\Theta_\cn \to \Delta_{(2m-5)!!-1},$$
whose values are the $m$-tet CF for the set $Y$ on $\cn$. 
(Due to the structure of the NMSC model, these can also be viewed as concordance factors for the induced metric network $\cn|_Y$.) 
The collection of $m$-tet concordance factors for all such taxon subsets $Y\in\binom {X}{m}$ forms the $m$-tet CFs for $(\cn,\theta$). 
Taking $m=4$ gives the quartet CFs which have been well studied, and $m=5$ give the quintet CFs we focus on.

\ 

How the component functions of $\phi$ can be computed will be discussed in the next section, but an important observation is that edge lengths $x$ only appear through exponential expressions of the form
$$\exp(-kx)$$
with $k$ an integer. As a result, transforming parameters as $X=\exp(-x)$
results in $\phi$ being a polynomial map. This transformed length 
is called an \emph{edge probability}  since it gives the probability that two lineages entering a common edge of length $x$ and tracing backwards in time reach its end without coalescing. For convenience, we will henceforth work with edge probability parameters  rather than edge lengths, replacing the original parameter space with the transformed one.

\subsection{Algebra, Distinguishability, and Identifiability}

Considering $\phi^Y_\cn$ as a polynomial map, we can use algebra to answer identifiability questions from $m$-tet CFs. For this, we first assume $X=Y$, $|X|=m$
and for each rooted network
$\cn$ on $X$ consider
the polynomial map 
\[
    \phi_\cn=\phi_\cn^X:\Theta_\cn \to \Delta_{(2m-5)!! -1},
\]
which extends from the stochastic numerical parameters space $\Theta_\cn$ to a  polynomial map on a complex space
\[
    \Phi_\cn: \Cc^{N} \to \Cc^{(2m-5)!!}
\]
with $N$ the number of numerical parameters for $\cn$.
If $\{u_T\}$ is a collection of indeterminates indexed by all binary unrooted topological gene trees $T$ on $X$, this map induces a ring homomorphism
\begin{align*}
    \Phi_\cn^* : \Cc[\{u_T\}] &\to \Cc[\theta_1,\dotsc,\theta_N] \\
    u_T &\mapsto \Phi_\cn(\theta)_{T}
\end{align*}
whose kernel (i.e., all polynomials in the $u_T$ which vanish on all CFs, even for the original parameter space $\Theta_\cn$) is an ideal $I_\cn$.
Elements of $I_\cn$ are called \emph{phylogenetic CF invariants} of $\cn$, and express the polynomial relationships among the CF entries.

The locus of points in $\mathbb C^N$ at which all
elements of $I_\cn$ vanish
is the \emph{CF variety} of $\cn$, denoted $V_\cn = V(I_\cn)$. The set of CF probability distributions $\phi(\Theta_{\cn})$ is a subset of $V_\cn$, which is its
Zariski closure.

CF ideals and varieties provide a powerful tool for  distinguishing between
networks, by capturing structure across all parameter values. 
Suppose $\cn_1$ and $\cn_2$ are two network topologies. 
If $V_{\cn_1} \neq V_{\cn_2}$, then from basic algebraic geometry either
the varieties intersect in one of lower dimension, or one is a proper subvariety of the other. If their intersection is of lower dimension, then for all numerical parameters (even from $\Theta_{\cn_i}$) outside a set of measure zero CFs on one will not lie in the other. 
If $V_{\cn_1}\subsetneq V_{\cn_2}$, then, again outside a set of measure zero, parameters on $\cn_2$ give CFs not on $V_{\cn_1}$. Thus we say in either case that for \emph{generic parameters} the networks are
\emph{distinguishable by CFs}. 
If all pairs of networks within some class are generically distinguishable by CFs, then we say that within that class the networks are \emph{generically identifiable by CFs}.

In \Cref{sec: root identifiability}, we sort the class of 5-taxon 2-cycle free level-1 networks based on computations of their quintet CF ideals $I_\cn$, and hence their quintet varieties, $V_\cn$.
The first step toward this is undertaken in the next section, where we develop an algorithm for computing the map $\phi_\cn$.

We end this section with an example illustrating the power
of quintet CF varieties for root identifiability.

\begin{example}
\label{example: quartet-quintet-example}

    We recast some of the results of \cite{Allman2011}, on identifiability of roots from CFs for phylogenetic trees, in the algebraic framework just described. Specifically, we
    show that quartet CFs fail to provide enough information to capture the tree's root location, while quintet CFs allow one to identify the root.
    The parameterizations used below were computed using the algorithm described in \Cref{sec: algorithm}, though they can also be found in \cite{Allman2011}.

    \begin{figure}
    \begin{center}
    \begin{tikzpicture}[
        >=Stealth,
        every node/.style={font=\small},
        dot/.style={circle,fill=black,inner sep=1.3pt},
        leaf/.style={dot, label=below:{\small #1}},
        int/.style={dot},
        scale=0.3
    ]
    \coordinate (u) at (0,0) {};
    \coordinate (v) at (-10,0) {};
    \coordinate (w) at (10,0) {};
    \coordinate[label=below left:$A$] (A) at (-15,-8.6) {};
    \coordinate[label=above left:$B$] (B) at (-15,8.6) {};
    \coordinate[label=above:$C$] (C) at (0,10) {};
    \coordinate[label=above right:$D$] (D) at (15,8.6) {};
    \coordinate[label=below right:$E$] (E) at (15,-8.6) {};

    \coordinate[label=right:$x_1$] (x1) at (-11.25,2.15) {};
    \coordinate[label=right:$x_2$] (x2) at (-11.25,-2.15) {};
    \coordinate[label=above:$x_3$] (x3) at (-7.5,0) {};
    \coordinate[label=above:$x_5$] (x5) at (-2.5,0) {};
    \coordinate[label=right:$x_4$] (x4) at (0,2.5) {};
    \coordinate[label=above:$x_6$] (x6) at (2.5,0) {};
    \coordinate[label=above:$x_9$] (x9) at (7.5,0) {};
    \coordinate[label=left:$x_7$] (x7) at (11.25,2.15) {};
    \coordinate[label=left:$x_8$] (x8) at (11.25,-2.15) {};
    
    \draw[line width=1.3pt] (A) -- node[int,midway,label=left:$1$] {} (v);
    \draw[line width=1.3pt] (B) -- node[int,midway,label=left:$2$] {} (v);
    \draw[line width=1.3pt] (C) -- node[int,midway,label=left:$3$] {} (u);
    \draw[line width=1.3pt] (D) -- node[int,midway,label=right:$4$] {} (w);
    \draw[line width=1.3pt] (E) -- node[int,midway,label=right:$5$] {} (w);
    \draw[line width=1.3pt] (v) -- node[int,midway,label=below:$6$] {} (u);
    \draw[line width=1.3pt] (u) -- node[int,midway,label=below:$7$] {} (w);
    \end{tikzpicture}
    \end{center}
    \caption{The 5-taxon unrooted topological tree, with  its seven possible root locations and edge probabilities labelled.}\label{fig:5tree}
    \end{figure}

    Consider the unrooted phylogenetic tree $\ct$ of \Cref{fig:5tree}, on taxa  $X = \{A,B,C,D,E\}$. It can be rooted at any of the seven numbered nodes to create a rooted binary phylogenetic tree. 
    Let $\ct^+_i$ denote the tree whose root is at position $i$.
    Note the edge probabilities for each edge are given in \Cref{fig:5tree}, 
    and degree 2 nodes are not suppressed
    so that each edge probability makes sense 
    regardless of the root location.
    
    Fixing $i$, for each 4-taxon subset $Y \in \binom{X}{4}$, we have a map for the quartet CFs
    \[
        \phi_{\ct^+_i}^Y : \Theta_{\ct_i^+} \to \Delta_2,
    \]
    and we can consider all quartet CFs at once with
    \begin{align*}
        \phi_{\ct^+_i}^{\text{quartet}} : \Theta_{\ct_i^+} &\to (\Delta_2)^5 \\
        \theta &\mapsto (\phi_{\ct^+_i}^Y(\theta))_{Y \in \binom{X}{4}}
    \end{align*}
    We label coordinates of $(\Delta_2)^5 \subset [0,1]^{15}$ by the splits of the corresponding unrooted 4-taxon gene tree, and compute the vanishing ideal of the Zariski closure of the image of $\phi_{\ct^+_i}^{\text{quartet}}$.
    Regardless of $i$, the parameterizations are actually identical;
    they are defined as follows.
    \[
    \begin{array}{lll}
        u_{\texttt{AB|CD}} \mapsto 1 - \frac{2}{3} x_3 x_5, &
        u_{\texttt{AC|BD}} \mapsto \frac{1}{3} x_3 x_5, & 
        u_{\texttt{AD|BC}} \mapsto \frac{1}{3}x_3 x_5 \\[4pt]
        u_{\texttt{AB|CE}} \mapsto 1 - \frac{2}{3} x_3x_5, &
        u_{\texttt{AC|BE}} \mapsto \frac{1}{3} x_3 x_5, &
        u_{\texttt{AE|BC}} \mapsto \frac{1}{3} x_3 x_5 \\[4pt]
        u_{\texttt{AB|DE}} \mapsto 1 - \frac{2}{3} x_3 x_5 x_6 x_9, &
        u_{\texttt{AE|BD}} \mapsto \frac{1}{3} x_3 x_5 x_6 x_9, &
        u_{\texttt{AD|BE}} \mapsto \frac{1}{3} x_3 x_5 x_6 x_9 \\[4pt]
        u_{\texttt{AE|CD}} \mapsto \frac{1}{3} x_6 x_9, &
        u_{\texttt{AC|DE}} \mapsto 1 - \frac{2}{3}x_6 x_9, &
        u_{\texttt{AD|CE}} \mapsto \frac{1}{3} x_6 x_9 \\[4pt]
        u_{\texttt{BE|CD}} \mapsto \frac{1}{3} x_6 x_9, & 
        u_{\texttt{BD|CE}} \mapsto \frac{1}{3}x_6 x_9, & 
        u_{\texttt{BC|DE}} \mapsto 1 - \frac{2}{3} x_6 x_9
    \end{array}
    \]
    The kernel of the map can be readily computed, and a generating set is given below.
    \[
    \begin{array}{ccc}
    u_{\texttt{BE|CD}}+u_{\texttt{BD|CE}}+u_{\texttt{BC|DE}}-1, & u_{\texttt{AE|CD}}+u_{\texttt{AC|DE}}+u_{\texttt{AD|CE}}-1, & u_{\texttt{AB|DE}}+u_{\texttt{AE|BD}}+u_{\texttt{AD|BE}}-1, \\
    u_{\texttt{AB|CE}}+u_{\texttt{AC|BE}}+u_{\texttt{AE|BC}}-1, & u_{\texttt{AB|CD}}+u_{\texttt{AC|BD}}+u_{\texttt{AD|BC}}-1, & u_{\texttt{AB|CD}}-u_{\texttt{AB|CE}},\\
    u_{\texttt{AC|BD}}-u_{\texttt{AD|BC}}, & u_{\texttt{AC|BD}}-u_{\texttt{AC|BE}}, & u_{\texttt{AC|BD}}-u_{\texttt{AE|BC}},\\
    u_{\texttt{AE|BD}}-u_{\texttt{AD|BE}}, & u_{\texttt{AC|DE}}-u_{\texttt{BC|DE}}, & u_{\texttt{AD|CE}}-u_{\texttt{BD|CE}},\\ 
    u_{\texttt{BE|CD}}-u_{\texttt{BD|CE}}, & u_{\texttt{AE|CD}}-u_{\texttt{BD|CE}}, & 3u_{\texttt{AE|BC}}u_{\texttt{AE|CD}}-u_{\texttt{AE|BD}}.
    \end{array}
    \]
    While this generating set is not minimal, it is simple to describe: 
    The linear invariants are given by the five sum-to-1 conditions for each 4-taxon subset, and by nine binomials induced by the cherry symmetries $A \leftrightarrow B$ and $D \leftrightarrow E$ of $\ct$.
    The quadratic generator, when expressed in terms of the parameters, states that the edge probabilities for the two internal edges of $\ct$ multiply to give the edge probability for the composite edge they form.
    
    This ideal defines a degree 2 surface (of dimension 2) in $\Rr^{15}$.
    If one were only given the ideal, one might suspect that quartet CFs
    are not enough to generically identify the root. 
    However, it is possible that there could be additional semialgebraic
    constraints which could be used to identify the root.
    In our case, since the parameterizations are the same regardless of root location, 
    this cannot happen, and the semialgebraic sets $\phi_{\ct_i}^{\text{quartet}}\left((\Delta_2)^5\right)$ 
    (for $i = 1,\dotsc,7$)
    are identical.
    In particular, the root is not identifiable from quartet CFs.

    In contrast, instead of considering quartets, we can consider quintet CFs from the maps
    \[
        \phi_{\ct_i^+} : \Theta_{\ct_i^+} \to \Delta_{14}.
    \]
    Coincidentally, the image of this map lives in $\Rr^{15}$, though the coordinates are now labeled by unrooted 5-taxon gene trees.
    For each $i$, we can compute the vanishing ideal of the Zariski closure of the image of $\phi_{\ct_i^+}$.
    This computation results in seven distinct ideals, each defining a variety of dimension 3.
    Therefore, we can conclude that quintet CFs generically identify the location of the root of phylogenetic trees.
\end{example}

\subsection{Algebraic Matroids}
In \Cref{example: quartet-quintet-example},
it was possible to completely characterize several CF ideals.
However, for more complex parameterizations it may be infeasible to compute a full (or even partial) generating set.
In such cases, algebraic matroids can be a powerful tool for distinguishing two parameterized varieties \cite{HolleringSullivant2021}. 
As such, we provide a definition of a matroid here and the necessary theory for using matroids for distinguishability.

\begin{definition}
\label{def: matroid}
    A \textit{matroid} $\mathcal{M} = (E, \ci)$ is a pair with a finite set, $E$, and a collection $\ci \subseteq 2^E$ of subsets of $E$, satisfying the following conditions:
    \begin{enumerate}
        \item $\emptyset \in \ci$,
        \item if $I' \subseteq I \in \ci$, then $I' \in \ci$, and
        \item if $I_1,I_2 \in \ci$ with $|I_1| < |I_2|$, then there exists $e \in I_2 \setminus I_1$ so that $I_1 \cup \{e\} \in \ci$.
    \end{enumerate}
    A set $I \in \ci$ is said to be \textit{independent}.
\end{definition}

A proof of the proposition below can be found in \cite{rosen2014computing}.
\begin{proposition}
\label{prop: alg-matroid}
Let $k$ be a field.
Let $V \subseteq k^n$ be an irreducible variety with prime vanishing ideal $P \subseteq k[x_1,\dotsc,x_n]$.
Then $\mathcal{M}(V)$ is a matroid on $\{1,\dotsc,n\}$ defined by $I$ is independent if and only if 
\[
    P \cap k[x_i ~|~ i \in I] = \langle 0 \rangle.
\]
Moreover, suppose $\phi : k^m \to k^n$ is a polynomial map of the form 
\[
    \phi(\theta_1,\dotsc,\theta_m) = (\phi_1(\theta),\dotsc,\phi_n(\theta))
\]
with $V = \overline{\phi(k^m)}^{\text{Zar}}$, and
consider the transpose of the Jacobian matrix
\[
    J(\phi) = \left(\frac{\partial \phi_j}{\partial \theta_i}  \right), \; 1\leq i \leq m,\; 1 \leq j \leq n.
\]
Then $\mathcal{M}(V)$ can be described as follows: a subset $I \subseteq \{1,\dotsc,n\}$ is independent if and only if the columns of $J(\phi)$ indexed by the elements of $I$ are linearly independent over the fraction field $k(\theta_1,\dotsc,\theta_m)$.
\end{proposition}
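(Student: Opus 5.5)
Two facts need proof: that $\mathcal{M}(V)$ satisfies the axioms of \Cref{def: matroid}, and that for a parameterized $V$ it coincides with the linear matroid of the columns of $J(\phi)$. The approach is to translate independence into algebraic independence in the function field. Since $P$ is prime, $R = k[x_1,\dotsc,x_n]/P$ is a domain. Let $K$ be its fraction field (the function field of $V$), and write $\bar x_i$ for the image of $x_i$. A subset $I$ satisfies $P \cap k[x_i \mid i\in I] = \langle 0\rangle$ exactly when no nonzero polynomial in the $x_i$, $i\in I$, vanishes on $V$. This is the same as saying $\{\bar x_i\}_{i\in I}$ is algebraically independent over $k$ in $K$.

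For the first claim, I would then use the standard fact that algebraic independence in a field extension $K/k$ is a matroid. This goes through the exchange lemma for transcendence bases. Axioms (1) and (2) are immediate. For (3), suppose $|I_1|<|I_2|$ and, for contradiction, every $\bar x_e$ with $e\in I_2\setminus I_1$ is algebraic over $k(\bar x_i : i\in I_1)$. Then all of $I_2$ is algebraic over $k(\bar x_{I_1})$. Hence
\[
\operatorname{trdeg}_k k(\bar x_{I_1\cup I_2}) \le |I_1| < |I_2|,
\]
which contradicts the algebraic independence of $I_2$. The needed transcendence-degree facts, including transitivity of algebraic dependence, I would cite as standard.

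For the second claim, I would use that $V$ is the closure of the image of $\phi$. This makes the pullback $\phi^*:R\to k[\theta]$ injective, and so it extends to an embedding of $K$ into $k(\theta)$ sending $\bar x_j \mapsto \phi_j$. The independent sets are therefore the sets $I$ for which $\{\phi_j\}_{j\in I}$ is algebraically independent over $k$ in $k(\theta)$. It remains to invoke the Jacobian criterion: in characteristic $0$, polynomials $f_1,\dotsc,f_r\in k[\theta]$ are algebraically independent if and only if their gradients are linearly independent over $k(\theta)$. Those gradients are exactly the columns of $J(\phi)$ indexed by $I$.

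The main obstacle is the Jacobian criterion, and I would prove it via differentials. Suppose there is a dependence $F(f_1,\dotsc,f_r)=0$ with $F$ of minimal degree. Differentiating gives $\sum_j \partial_jF(f)\,\nabla f_j=0$. By minimality, and since $F$ is nonconstant in some variable, some $\partial_jF(f)\neq0$; here characteristic $0$ is needed. Conversely, if the $f_j$ are independent, extend them to a transcendence basis $f_1,\dotsc,f_m$ of $k(\theta)$. Each $\theta_i$ is then separably algebraic over $k(f)$, so the derivations $\partial/\partial f_j$ extend to $k(\theta)$. This yields a matrix $\partial\theta/\partial f$ that inverts $\partial f/\partial\theta$, and hence forces the gradients to be independent.

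This is where the hypothesis on $k$ matters: in positive characteristic the criterion fails (for example, $\theta^p$ has zero derivative). I would therefore state that $k$ has characteristic $0$, as with $k=\Cc$ in this paper, or defer to \cite{rosen2014computing} for that detail.
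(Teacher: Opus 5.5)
The paper gives no argument for this proposition; it simply refers to \cite{rosen2014computing}. Your proposal is correct and supplies a self-contained version of the standard route. You identify $\mathcal{M}(V)$ with the algebraic matroid of $\bar x_1,\dotsc,\bar x_n$ in the function field and get the exchange axiom from transcendence degree. You then use $\phi^*$ to embed the function field in $k(\theta)$. Finally you prove the Jacobian criterion directly: a minimal-degree relation gives one direction, and extending $\partial/\partial f_j$ across the separable extension $k(\theta)/k(f_1,\dotsc,f_m)$ gives the other. The citation buys brevity; your version buys transparency about hypotheses.

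On hypotheses, do not treat characteristic as a detail to defer to the reference. The ``Moreover'' part is false as stated for an arbitrary field. Take $m=n=1$ and $\phi(\theta)=\theta^p$ over an infinite field of characteristic $p$. Then $V=k$ and $P=\langle 0\rangle$, so $\{1\}$ is independent in $\mathcal{M}(V)$, yet the only column of $J(\phi)$ is $p\theta^{p-1}=0$. Characteristic $0$ must therefore be an explicit hypothesis, as the paper itself assumes in \Cref{lem:Jacdim}. This costs nothing here, since all the computations are over $\mathbb{Q}$ or $\Cc$.

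Characteristic $0$ also makes $k$ infinite, and your injectivity step silently uses this, so say it explicitly. $f\in\ker\phi^*$ means $f\circ\phi=0$ as a polynomial. Since $k$ is infinite, this is equivalent to $f\circ\phi$ vanishing on $k^m$, that is, to $f$ vanishing on $\phi(k^m)$ and hence on $V$. That gives $\ker\phi^*=P$.

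As a bonus, your criterion identifies the algebraic matroid of $\phi_1,\dotsc,\phi_n$ with the column matroid of $J(\phi)$, so their rank functions agree. Taking $I=\{1,\dotsc,n\}$ gives $\operatorname{rank} J(\phi)=\operatorname{trdeg}_k k(\phi_1,\dotsc,\phi_n)$. This is the differential half of \Cref{lem:Jacdim}, obtained without appeal to \cite[Theorem~16.14]{EisenbudCA}.
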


The power of \Cref{prop: alg-matroid} in this work is that it allows us to distinguish between two CF varieties by using their parameterizations to show they have distinct matroids without any knowledge of the invariants.
Indeed, if two networks have distinct algebraic matroids,
then the ideals must be different.
For us, it will be enough to find a single circuit (i.e., a minimal dependent set) for one matroid that is not a circuit for the other matroid.

\section{Concordance Factor Algorithm}
\label{sec: algorithm}

In this section, we provide the mathematical framework
for our algorithm which computes symbolic formulas for the distribution of
gene trees for a species network, i.e., the concordance factors. Although ultimately interested in quintet CFs for binary networks, the algorithm does not require these special assumptions.

\subsection{Background} The general computation of probabilities of rooted topological gene trees from species trees or networks under a coalescent model began with
\cite{Degnan2005}. There  
a \emph{coalescent history}
for a fixed species tree $\ct$ and rooted topological gene tree $T^+$ is defined as a pairing of the nodes of $T^+$ with certain edges (populations) of $\cn$ in which
coalescent events resulting in those nodes could occur as $T^+$ is realized under the NMSC. 
Not all pairings are coalescent histories for $T^+$, as coalescences can occur only in edges ancestral to the taxon lineages involved, and the structure of $T^+$ may impose additional constraints. 

The probability of a coalescent history $\mathbf{h}$ leading to $T^+$ under the multispecies coalescence on $\ct$ is then 
\begin{equation}
    c(\mathbf{h}) \prod_{e \text{ edge of }\cn} g_{i(\mathbf{h},e), j(\mathbf{h},e)} (x_e),
\end{equation}
where $c(\mathbf{h})$ is a rational constant reflecting certain multiplicities, 
$j(\mathbf{h},e)$ is the number of lineages leaving the edge $e$ after the coalescent process starting with
$i(\mathbf{h},e)$ lineages entering the edge $e$. The functions $g_{i,j}(x)$ give the probability that $i$ lineages coalesce to $j$ lineages in an edge of length $x$, with formulas found by \cite{Tavare1984}:
\begin{equation}
    g_{i,j}(t) = \sum_{k=j}^i \mathrm{exp}\left(-\binom{k}{2} t\right) \frac{(2k-1)(-1)^{k-j}}{j!(k-j)!(j+k-1)} \prod_{m=0}^{k-1}\frac{(j+m)(i-m)}{i+m}, \quad 1 \le j \le i
\label{eq:Tavare}
\end{equation}
The probability of the rooted gene tree $T^+$ is then the  sum of the probabilities of all coalescent histories for $T^+$. Subsequent work built on this framework for faster computation \cite{Wu2012,Wu2016}.

For species networks, however, this notion of a coalescent history is inadequate, since when lineages reach a hybrid node, they may trace back through several different hybrid edges. This issue was circumvented in the algorithm of \cite{YuDegnanNakhleh2012} by first converting the network to a collection of multilabeled trees.
For the special case of quartet CFs, a fast recursive algorithm is presented in \cite{AnomQuartets} and implemented in the Julia package
\texttt{QuartetNetworkGoodnessFit}
\cite{QuartetNGF}.

All of the preceding were implemented for numerical computations, and for our purposes symbolic formulas are needed. While recently the Julia package has been adapted for symbolic quartet CF computation \cite{Kong2026}, it does not easily extend to quintets or beyond.  
 Thus we present a new approach that, in principal, computes symbolic gene tree probabilities for arbitrary phylogenetic networks.

\subsection{Recursive algorithm} 
For a recursive algorithm, we consider a partial coalescent process leading towards the formation of $T$ only in internal edges of $\cn$ near leaves. 
Then both the network $\cn$ and the gene tree $T$ are modified, allowing for recursive computation on smaller graphs.
Full coalescent histories are never made explicit in this approach.  

We formulate the algorithm for unrooted gene trees $T$, dealing with all possible rootings as it proceeds.
Here $T$ may be an unrooted binary topological tree on any subset of the taxa on $\cn$, so by enumerating
all $T$ on subsets of $m$ taxa, all $m$-tet CFs can be found.

With taxa $X$ on $\cn$, let $Y\subseteq X$ be the taxa on $T$.
First, we reduce to the case $X = Y$.
Indeed, an edge $e = (u,v) \in E(\cn)$ where $\mathrm{desc}_X(v) \subseteq X \setminus Y$ cannot
have a coalescent event involving the taxa $Y$, so
all such edges can be deleted from $\cn$.

To understand the recursion, note that
if a network $(\cn,\theta)$ is not a star tree it will have some edge, $e=(u,v)$, all of whose child edges are pendant. 
Choosing such an edge, we consider two cases: $e$ is a tree edge, and $e$ is a hybrid edge. 

If $e=(u,v)$ is a tree edge, then no coalescent event can occur below $v$ and the events involving only $Y$ lineages that occur on $e$ will result in the formation of a gene forest on $Y\cap \desc_X(v)$. 
These are rooted binary forests which are compatible with $T$, in the following sense:
A forest $F$ is \emph{compatible} with $T$ if for some rooting  $T^+$ of $T$, its elements are obtained as disjoint subtrees  below some nodes of $T^+$. 
For convenience, we label the roots of these subtrees by the set of taxa on them.
We denote the set of such forests by
$\cf(\cn,T,e).$
An example of such a set is given in \Cref{fig:coalescent-cases-uv}.

\begin{figure}[ht]
\centering
\begin{tikzpicture}[
    >=Stealth,
    every node/.style={font=\small},
    dot/.style={circle,fill=black,inner sep=1.3pt},
    leaf/.style={dot, label=below:{\small #1}},
    int/.style={dot},
]

\begin{scope}[shift={(1,2.2)}]
  \def\ovalA{1.4}   
  \def\ovalB{0.45}  
  \draw[dashed, line width=0.9pt]
  (0,{0.8+\ovalB}) ellipse [x radius=\ovalA, y radius=\ovalB];
  \node[int,label=above:$u$] (u) at (0,0.8) {};
  \node[int,label=right:$v$] (v) at (0,0.0) {};
  \draw[->,line width=0.8pt] (u) -- (v);

  \node[leaf=$A$] (A) at (-0.7,-0.9) {};
  \node[leaf=$B$] (B) at (0,-0.9) {};
  \node[leaf=$C$] (C) at (0.7,-0.9) {};

  \draw[->,line width=0.8pt] (v) -- (A);
  \draw[->,line width=0.8pt] (v) -- (B);
  \draw[->,line width=0.8pt] (v) -- (C);

  \node[anchor=east] at (-1.55,0.35) {$\cn \;=$};
  \node[below] at (0, -1.6) {(a)};
\end{scope}

\begin{scope}[shift={(5,2.2)}]
  \node[anchor=east] at (-1.1,0.35) {$T \;=$};

  \node[int] (x) at (0,0) {};
  \node[int] (y) at (1.8,0) {};
  \node[int] (z) at (0.9,0) {};

  \node[dot,label=below:$A$] (TA) at (-0.8,-0.8) {};
  \node[dot,label=above:$B$] (TB) at (-0.8, 0.8) {};
  \node[dot,label=above:$C$] (TC) at (0.9, 1.0) {};
  \node[dot,label=above:$D$] (TD) at (2.6,0.8) {};
  \node[dot,label=below:$E$] (TE) at (2.6,-0.8) {};

  \draw[line width=0.8pt] (x) -- (y);
  \draw[line width=0.8pt] (x) -- (TA);
  \draw[line width=0.8pt] (x) -- (TB);
  \draw[line width=0.8pt] (z) -- (TC);
  \draw[line width=0.8pt] (y) -- (TD);
  \draw[line width=0.8pt] (y) -- (TE);.

  \node[below] at (0.9,-1.6) {(b)};
\end{scope}

\def\topY{0.8}
\def\botY{-1.0}
\def\pipeextra{0.25}

\newcommand{\pipes}{%
  \draw[line width=1.2pt] (-0.35,\botY-\pipeextra) -- (-0.35,\topY+\pipeextra);
  \draw[line width=1.2pt] (1.95,\botY-\pipeextra) -- (1.95,\topY+\pipeextra);
}

\begin{scope}[shift={(-1,-1.5)}]
  \pipes
  \foreach \x/\lab in {0/$A$, 0.8/$B$, 1.6/$C$} {
    \node[dot,label={below:{\small \lab}}] at (\x,{(\botY + \topY)/2}) {};
  }
  \node[below] at (0.8, \botY-0.6) {(i) $F_1$};

\end{scope}

\begin{scope}[shift={(3,-1.5)}]
  \pipes
  \node[dot,label=below:{\small $A$}] (a2) at (0,{(2*\botY + \topY)/3}) {};
  \node[dot,label=below:{\small $B$}] (b2) at (0.8,{(2*\botY + \topY)/3}) {};
  \node[dot,label=below:{\small $C$}] (c2) at (1.6,{(2*\botY + \topY)/3}) {};

  \coordinate (m2) at (0.4,{(\botY + 2*\topY)/3});
  \draw[line width=0.8pt] (a2) -- (m2) node[dot] {} node[above] {$AB$};
  \draw[line width=0.8pt] (b2) -- (m2);

  \node[below] at (0.8, \botY-0.6) {(ii) $F_2$};

\end{scope}

\begin{scope}[shift={(7,-1.5)}]
  \pipes
  \node[dot,label=below:{\small $A$}] (a3) at (0,{(3*\botY+\topY)/4}) {};
  \node[dot,label=below:{\small $B$}] (b3) at (0.8,{(3*\botY+\topY)/4}) {};
  \node[dot,label=below:{\small $C$}] (c3) at (1.6,{(3*\botY+\topY)/4}) {};

  \coordinate (m31) at (0.4,{(\botY + \topY)/2});
  \coordinate (m32) at (0.95,{(\botY+3*\topY)/4});

  \draw[line width=0.8pt] (a3) -- (m31) node[dot] {};
  \draw[line width=0.8pt] (b3) -- (m31);
  \draw[line width=0.8pt] (m31) -- (m32) node[dot] {} node[above] {$ABC$};

  \draw[line width=0.8pt] (c3) -- (m32);
  \node[below] at (0.8, \botY-0.6) {(iii) $F_3$};
\end{scope}

\end{tikzpicture}
\caption{
For case 1 of the recursion, a tree edge $(u,v)$ of the current network $\cn$ is chosen with only pendant edges below $v$, here leading to $A,B,C$. Three forests $F_i$ on $A,B,C$ for this edge can be obtained by rooting the gene tree $T$ on some edge, say the one leading to $E$, to obtain $T^+$, and then taking disjoint subtrees below certain of its nodes. Note that two additional forests on $A,B,C$, given by exchanging the label $C$ with $A$ or $B$ in $F_2$ are not compatible with $T$, and therefore not shown.
}
\label{fig:coalescent-cases-uv}
\end{figure}

For such a forest $F \in \cf(\cn,T,e)$ with $i$ taxa and $j$ roots, the probability of $F$ forming in this edge is given by
\begin{equation}
\label{eqn:prob-in-edge}
    \mathbb P(F,e,x_e) = \frac{c(F,e)}{\prod_{k=j+1}^i \binom{k}{2}}\cdot g_{ij}(x_e)
\end{equation}
where $c(F,e)$ is the number of ordered coalescent histories within $e$ giving rise to $F$. This formula is justified as follows. 
The term $g_{ij}(x_e)$ gives the conditional  probability that $j$ distinct lineages reach $u$ given that $i$ lineages enter $e$ at $v$. Since each ordering of coalescent events in $e$ is equally likely,
the first factor gives the proportion of them that give rise to $F$. 
The product is thus the probability of $F$ forming on this edge. 

\begin{remark} 
    The function $c(F,e)$ is well-studied and can be computed using Knuth's hook-length formula for forests \cite{KnuthArt3}. Ancestry in the forest $F$ determines a partial order on its internal nodes, and hence on the coalescent events. The hook length formula counts extensions of this partial order to linear ones. 
\end{remark}

We next modify $\cn$ and $T$ according to the forest $F$.
Let $\cn_F$ be the network obtained from $\cn$ by deleting the edge $e$ and all its descendants, and for each tree $t$ in $F$ attaching a new edge with parent $u$ and child labeled by $t$'s root label. $\cn_F$ inherits parameters $\theta_F$ from those of $\cn$, since lengths of pendant edges need not be specified under the NMSC with one lineage per taxon.
To obtain $T_F$, for each tree in $F$ we delete all but its root node in $T$, labeling it as it is in $F$. See \Cref{fig:NFTF} for an example.

The first case of the recursion is now straightforward to state. If $e = (u,v)$ is a tree edge of $\cn$ with only pendant edges below $v$, then
\begin{equation}
\label{eqn:regular-recursion}
    \mathbb{P}(T~|~(\cn,\theta)) = \sum_{F\in\cf(\cn,T,e)} \mathbb P (F,e,x_e) \cdot \mathbb{P}(T_F~|~(\cn_F,\theta_F)).
\end{equation}
This simply expresses that $T$ may only form through  certain partial coalescent trees forming in $e$, together with other constrained coalescent events in the rest of $\cn$. 

\begin{figure}[ht]
\begin{center}
    \begin{tikzpicture}[
        >=Stealth,
        every node/.style={font=\small},
        dot/.style={circle,fill=black,inner sep=1.3pt},
        leaf/.style={dot, label=below:{\small #1}},
        int/.style={dot},
    ]

    \begin{scope}[shift={(0,0)}]
        \def\ovalA{1.4}   
        \def\ovalB{0.45}  
        \draw[dashed, line width=0.9pt]
        (0,{0.8+\ovalB}) ellipse [x radius=\ovalA, y radius=\ovalB];
        \node[int,label=above:$u$] (uv) at (0,0.8) {};

        \node[leaf=$A$] (A) at (-0.7,-0.1) {};
        \node[leaf=$B$] (B) at (0,-0.1) {};
        \node[leaf=$C$] (C) at (0.7,-0.1) {};

        \draw[->,line width=0.8pt] (uv) -- (A);
        \draw[->,line width=0.8pt] (uv) -- (B);
        \draw[->,line width=0.8pt] (uv) -- (C);

        \node[below] at (0, -1) {$\cn_{F_1}$};
    \end{scope}

    \begin{scope}[shift={(4,0)}]
        \def\ovalA{1.4}   
        \def\ovalB{0.45}  
        \draw[dashed, line width=0.9pt]
        (0,{0.8+\ovalB}) ellipse [x radius=\ovalA, y radius=\ovalB];
        \node[int,label=above:$u$] (uv) at (0,0.8) {};

        \node[leaf=$AB$] (AB) at (-0.7,-0.1) {};
        \node[leaf=$C$] (C) at (0.7,-0.1) {};

        \draw[->,line width=0.8pt] (uv) -- (AB);
        \draw[->,line width=0.8pt] (uv) -- (C);

        \node[below] at (0, -1) {$\cn_{F_2}$};
    \end{scope}

    \begin{scope}[shift={(8,0)}]
        \def\ovalA{1.4}   
        \def\ovalB{0.45}  
        \draw[dashed, line width=0.9pt]
        (0,{0.8+\ovalB}) ellipse [x radius=\ovalA, y radius=\ovalB];
        \node[int,label=above:$u$] (uv) at (0,0.8) {};

        \node[leaf=$ABC$] (ABC) at (0,-0.1) {};

        \draw[->,line width=0.8pt] (uv) -- (ABC);

        \node[below] at (0, -1) {$\cn_{F_3}$};
    \end{scope}

    \begin{scope}[shift={(-1,-3)}]

        \node[int] (x) at (0.5,0) {};
        \node[int] (y) at (1.0,0) {};
        \node[int] (z) at (1.5,0) {};

        \node[dot,label=below:$A$] (TA) at (0,-0.5) {};
        \node[dot,label=above:$B$] (TB) at (0, 0.5) {};
        \node[dot,label=above:$C$] (TC) at (1, 0.6) {};
        \node[dot,label=above:$D$] (TD) at (2,0.5) {};
        \node[dot,label=below:$E$] (TE) at (2,-0.5) {};

        \draw[line width=0.8pt] (x) -- (z);
        \draw[line width=0.8pt] (x) -- (TA);
        \draw[line width=0.8pt] (x) -- (TB);
        \draw[line width=0.8pt] (y) -- (TC);
        \draw[line width=0.8pt] (z) -- (TD);
        \draw[line width=0.8pt] (z) -- (TE);.

        \node[below] at (0.9,-1.6) {$T_{F_1}$};
    \end{scope}

    \begin{scope}[shift={(3,-3)}]
        \node[int] (x) at (.75,0) {};
        \node[int] (y) at (1.25,0) {};

        \node[dot,label=below:$AB$] (TAB) at (.25,-.5) {};
        \node[dot,label=above:$C$] (TC) at (.25, .5) {};
        \node[dot,label=above:$D$] (TD) at (1.75,.5) {};
        \node[dot,label=below:$E$] (TE) at (1.75,-.5) {};

        \draw[line width=0.8pt] (TAB) -- (x) -- (TC) -- (x) -- (y) -- (TD) -- (y) -- (TE);
        \node[below] at (0.9,-1.6) {$T_{F_2}$};
    \end{scope}

    \begin{scope}[shift={(7,-3)}]
        \node[int] (x) at (1,0) {};

        \node[dot,label=below:$ABC$] (TABC) at (.5,-.5) {};
        \node[dot,label=above:$D$] (TD) at (.5,.5) {};
        \node[dot,label=below:$E$] (TE) at (1.5,0) {};

        \draw[line width=0.8pt] (TABC) -- (x) -- (TD) -- (x) -- (TE);
        \node[below] at (0.9,-1.6) {$T_{F_3}$};
    \end{scope}

    \end{tikzpicture}
\end{center}
\caption{For the first case of the recursion depicted in \Cref{fig:coalescent-cases-uv}, each of the forests $F_1,F_2,F_3$ leads to a modified network $\cn_{F_i}$ and gene tree $T_{F_i}$.
}\label{fig:NFTF}
\end{figure}

For the second case of the recursion, suppose $e$ is a hybrid edge all of whose descendant edges are pendant. 
Specifically, assume $e=(u_1,v)$  and $(u_2,v)$ is the only other hybrid edge with child $v$, with 
hybridization parameters
$\lambda_1, \lambda_2$ respectively, where $\lambda_1+\lambda_2=1$.
If the taxa descended from $v$ are $y_1,\dotsc,y_k$, 
then in a realization of the coalescent process let $\cl_i\subseteq [k]$ be the indices of the lineages entering $(u_i,v)$ so $[k]=\cl_1\sqcup \cl_2$.
Then the probability of this partition is
$$\mathbb P(\cl_1,\cl_2,\lambda_1,\lambda_2)=
    \lambda_1^{|\cl_1|} \lambda_2^{|\cl_2|}.
$$
In this situation, define a network $\cn_{\cl_1,\cl_2}$ by deleting all nodes and edges below $u_1,u_2$,  introducing  two new vertices $v_1,v_2$ and directed edges $(u_1,v_1),(u_2,v_2)$, and then edges $(v_1, y_i)$ for $i \in \cl_1$ and $(v_2,y_j)$ for $j \in \cl_2$. If $\cl_i = \emptyset$, then we do not introduce the vertex $v_i$ as it will have no descendants.
The new network is given parameters $\theta_{\cl_1,\cl_2}$ inherited from $\theta$,  with the modification that the new edges $(u_i,v_i)$ have the length of $(u_i,v)$ in $\theta$.
This essentially `splits' the hybridization, with the hybrid edges becoming tree edges, as depicted in \Cref{fig:hybsplit}.

Then we have the following recursive formula.
\begin{equation}
\label{eqn:hybrid-recursion}
    \mathbb{P}(T~|~(\cn,\theta)) = \sum_{\cl_1\sqcup\cl_2 = [k]}  \mathbb P(\cl_1,\cl_2,\lambda_1,\lambda_2) \cdot \mathbb{P}(T~|~(\cn_{\cl_1,\cl_2},\theta_{\cl_1,\cl_2}))
\end{equation}

\begin{figure}[ht]

\begin{center}
    \begin{tikzpicture}[
        >=Stealth,
        every node/.style={font=\small},
        dot/.style={circle,fill=black,inner sep=1.3pt},
        leaf/.style={dot, label=below:{\small #1}},
        int/.style={dot},
    ]

    \begin{scope}[shift={(4,0)}]
        \def\ovalA{1.4}   
        \def\ovalB{0.45}  
        \draw[dashed, line width=0.9pt]
        (0,{0.8+\ovalB}) ellipse [x radius=\ovalA, y radius=\ovalB];

        \node[int,label=above:$u_1$] (u1) at (-0.7,0.86) {};
        \node[int,label=above:$u_2$] (u2) at (0.7,0.86) {};
        \node[int,label=right:$v$] (v) at (0,0) {};
        \node[leaf=$y_1$] (A) at (-0.7,-0.86) {};
        \node[leaf=$y_2$] (B) at (0,-0.86) {};
        \node[leaf=$y_3$] (C) at (0.7,-0.86) {};

        \draw[->,line width=0.8pt] (u1) -- (v);
        \draw[->,line width=0.8pt] (u2) -- (v);
        \draw[->,line width=0.8pt] (v) -- (A);
        \draw[->,line width=0.8pt] (v) -- (B);
        \draw[->,line width=0.8pt] (v) -- (C);

        \node[below] at (0, -1.5) {$\cn$};
    \end{scope}

    \begin{scope}[shift={(0,-4)}]
        \def\ovalA{1.4}   
        \def\ovalB{0.45}  
        \draw[dashed, line width=0.9pt]
        (0,{0.8+\ovalB}) ellipse [x radius=\ovalA, y radius=\ovalB];

        \node[int,label=above:$u_1$] (u1) at (-0.7,0.86) {};
        \node[int,label=above:$u_2$] (u2) at (0.7,0.86) {};
        \node[int,label=left:$v_1$] (v1) at (-0.7,0) {};
        \node[int,label=right:$v_2$] (v2) at (0.7,0) {};
        \node[leaf=$y_1$] (A) at (-1,-0.86) {};
        \node[leaf=$y_2$] (B) at (-.4,-0.86) {};
        \node[leaf=$y_3$] (C) at (0.7,-0.86) {};

        \draw[->,line width=0.8pt] (u1) -- (v1);
        \draw[->,line width=0.8pt] (u2) -- (v2);
        \draw[->,line width=0.8pt] (v1) -- (A);
        \draw[->,line width=0.8pt] (v1) -- (B);
        \draw[->,line width=0.8pt] (v2) -- (C);
        \node[below] at (0, -1.5) {$\cn_{\{1,2\},\{3\}}$};
    \end{scope}

    \begin{scope}[shift={(4,-4)}]
        \def\ovalA{1.4}   
        \def\ovalB{0.45}  
        \draw[dashed, line width=0.9pt]
        (0,{0.8+\ovalB}) ellipse [x radius=\ovalA, y radius=\ovalB];

        \node[int,label=above:$u_1$] (u1) at (-0.7,0.86) {};
        \node[int,label=above:$u_2$] (u2) at (0.7,0.86) {};
        \node[int,label=left:$v_1$] (v1) at (-0.7,0) {};
        \node[int,label=right:$v_2$] (v2) at (0.7,0) {};
        \node[leaf=$y_1$] (A) at (0.4,-0.86) {};
        \node[leaf=$y_2$] (B) at (-0.7,-0.86) {};
        \node[leaf=$y_3$] (C) at (1,-0.86) {};

        \draw[->,line width=0.8pt] (u1) -- (v1);
        \draw[->,line width=0.8pt] (u2) -- (v2);
        \draw[->,line width=0.8pt] (v1) -- (B);
        \draw[->,line width=0.8pt] (v2) -- (A);
        \draw[->,line width=0.8pt] (v2) -- (C);

        \node[below] at (0, -1.5) {$\cn_{\{2\},\{1,3\}}$};
    \end{scope}

    \begin{scope}[shift={(8,-4)}]
        \def\ovalA{1.4}   
        \def\ovalB{0.45}  
        \draw[dashed, line width=0.9pt]
        (0,{0.8+\ovalB}) ellipse [x radius=\ovalA, y radius=\ovalB];

        \node[int,label=above:$u_1$] (u1) at (-0.7,0.86) {};
        \node[int,label=above:$u_2$] (u2) at (0.7,0.86) {};
        \node[int,label=left:$v_1$] (v1) at (-0.7,0) {};
        \node[leaf=$y_1$] (A) at (-1.2,-0.86) {};
        \node[leaf=$y_2$] (B) at (-0.7,-0.86) {};
        \node[leaf=$y_3$] (C) at (-0.2,-0.86) {};

        \draw[->,line width=0.8pt] (u1) -- (v1);
        \draw[->,line width=0.8pt] (v1) -- (B);
        \draw[->,line width=0.8pt] (v1) -- (A);
        \draw[->,line width=0.8pt] (v1) -- (C);

        \node[below] at (0, -1.5) {$\cn_{\{1,2,3\},\emptyset}$};
    \end{scope}
    \end{tikzpicture}
\end{center}

    \caption{For the second case of the recursion, with $e_1=(u_1,v)$ and $e_2=(u_2,v)$ hybrid edges of the current  network $\cn$, we `split' the hybrid node $v$ for each bipartition of its descendant taxa. Here 3 of the 8 such splittings are shown. Note that in the third splitting, $\cl_1 = \{1,2,3\}$, so we do not add $v_2$ to the graph.}\label{fig:hybsplit}
\end{figure}

\begin{remark}
    Although we assumed the hybrid node $v$ had exactly two parents, this is sufficient even for a non-binary network.  If there are more parents of the hybrid node, additional hybridizations  with edges of length 0 can be introduced to reach this case.
\end{remark}

Repeated applications of equations \Cref{eqn:regular-recursion} and \Cref{eqn:hybrid-recursion}, will either reduce the number of hybrid nodes (increasing the tree edges by 2) or reduce the number of tree edges, and thus will  eventually modify the network to a star tree, i.e., with all edges pendant. 
This corresponds to all lineages reaching the root, and entering the infinite length above-the-root edge where they must eventually coalesce.
This requires a final probability calculation, which is essentially a special case  of equation \Cref{eqn:prob-in-edge}.

\begin{lemma}
\label{lemma:prob-of-tree-at-root}
    Suppose $k$ labeled lineages enter a single population of infinite duration under the coalescent model, and let $T$ be an unrooted topological gene tree on these labels. Then
    \begin{equation}
    \label{eqn:prob-of-tree-at-root}
        \mathbb P(T) = \frac{1}{\prod_{i=2}^{k}\binom{i}{2}}\sum_{T^+} c(T^+)
    \end{equation}
    where the sum is over all $2k-3$ choices of edges on which to place the root of $T$, and $c(T^+)$ is the number of ordered coalescent histories giving rise to $T^+$.
\end{lemma}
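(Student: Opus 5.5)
The plan is to describe the coalescent process in an infinite-length population as a sequence of uniformly random merges, and then count. With $k$ lineages present, all $\binom{k}{2}$ pairs coalesce at the same rate. So, by the memoryless property, the first coalescence involves a uniformly random pair, independently of when it happens. After that merge there are $k-1$ lineages, and the same reasoning applies again. The duration is infinite, so with probability one the process continues until a single lineage remains, which is the case $j=1$ of \Cref{eqn:prob-in-edge} with $g_{k1}(\infty)=1$. Hence every ordered sequence of pairwise merges (an ordered coalescent history) ending in one lineage has probability
\[
\prod_{i=2}^{k}\binom{i}{2}^{-1},
\]
and the number of such sequences is exactly this product.

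Each ordered history determines a rooted binary topological tree $T^+$ on the $k$ labels. The merges are the internal nodes, and the final merge is the root. So for any rooted tree $T^+$, the probability of forming it is $c(T^+)/\prod_{i=2}^k\binom{i}{2}$, where $c(T^+)$ counts the ordered histories producing $T^+$. By the Remark following \Cref{eqn:prob-in-edge}, $c(T^+)$ is the number of linear extensions of the ancestry partial order on internal nodes, given by the hook length formula, though we will not need its explicit value.

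Finally, the events "the gene tree is $T^+$" for distinct rooted trees $T^+$ are disjoint. The unrooted tree $T$ arises exactly when the realized rooted tree is one of the rootings of $T$. A rooted binary tree on $k$ leaves has its root as a degree-2 node subdividing one edge, and suppressing that node gives an unrooted binary tree. Conversely, each of the $2k-3$ edges of the unrooted binary tree $T$ gives a distinct rooting. Summing over these rootings gives \eqref{eqn:prob-of-tree-at-root}.

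The only step needing care is this bijection between edges of $T$ and rootings $T^+$ whose unrooting is $T$. In particular we must check that distinct edges give distinct rooted trees even when $T$ has symmetries. This holds because the leaves are labeled: the root position is determined by the split of leaf labels induced at the root. The edge cases $k\le 2$ are handled by interpreting the empty product as $1$.
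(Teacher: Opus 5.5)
Your proposal is correct and follows the paper's approach: the paper treats this lemma as the special case $j=1$, $g_{k1}(\infty)=1$ of \eqref{eqn:prob-in-edge}, where each ordered sequence of merges is equally likely, summed over the $2k-3$ rootings of $T$. That is exactly your argument, and you spell out the uniform-ordering step and the edge--rooting bijection more carefully than the paper does. (A trivial slip: for $k=2$ the product $\prod_{i=2}^{2}\binom{i}{2}=1$ is not empty; only $k=1$ gives an empty product, and there the count $2k-3$ is meaningless anyway.)
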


While included in the last result, since there is only one unrooted tree topology on three taxa, it is immediate that if $k=3$, then $\mathbb{P}(T~|~\cn) = 1$.

\begin{remark}
    If $k \leq 5$, the distribution on gene trees appearing in \Cref{lemma:prob-of-tree-at-root} is uniform. Indeed, the probability of any such gene tree occuring is $1/(2k-5)!!$. This is no longer true if $k > 5$. For example if $k = 6$, there are two distinct unlabeled gene tree topologies which can be classified by the number of cherries. There are 15 gene trees with three cherries and 90 with two cherries. Using \Cref{eqn:prob-of-tree-at-root}, the probability of observing any specific gene tree with three cherries is $1/75$ and $2/225$ for a gene tree with two cherries.
\end{remark}

We present the full recursive probability calculation for a gene tree as \Cref{alg:SCM}. 
and implemented this algorithm in \texttt{Macaulay2} \cite{M2}. We use transformed parameters, with edge probabilities instead 
of edge lengths, so each $g_{ij}$ is polynomial, so the final formula is as well. 
Code is available at GitHub (\href{https://github.com/jcu237/SymbolicCoalescentModel}{https://github.com/jcu237/SymbolicCoalescentModel}).

\begin{remark}
Computation time of this algorithm is of course affected by number of taxa on $\cn$, but even for a species tree the number of edges between root and leaves affects the growth of the forest sets, and hence the number of recursive calls. The number of hybrid nodes, and how many descendants each has is also important. Since there is a positive probability that no coalescent events occur below a hybrid node, if it has $k$ descendant taxa, a single forest at one may generate $2^k$ recursive calls.
\end{remark}

\begin{algorithm}[H]
\caption{Symbolic Coalescent Model}\label{alg:SCM}
\KwIn{a metric rooted network $(\cn,\theta)$ on $X$, $\theta=(\{x_e\},\{\lambda_e\})$; an unrooted binary topological gene tree $T$ on  $Y\subseteq X$}

\KwOut{SCM$(\cn,T,\theta) = \mathbb{P}(T~|~(\cn,\theta))$ under the NMSC model}
\ 

\If{$T$ has $\leq 3$ leaves}{
  \Return $1$\;
}\
\ElseIf{there exists an edge $e = (u,v)$ of $\cn$ with only pendant edge descendants}{
\If{$e$ is a tree edge}{ 
  For each $F \in \cf(\cn,T,e)$, compute $\mathbb P(F,e,x_e)$\;
  \Return $\sum_{F\in\cf} \mathbb P (F,e,x_e) \cdot \mathrm{SCM}(\cn_F,T_F,\theta_F)$\;
}\
\Else{
  $e$ is a hybrid edge;\
  Let $e_1,e_2$ be the parental edges of $v$; $\lambda_1,\lambda_2$ their hybridization parameters\;
  $\{y_i\}_{i\in[k]}$ the descendants of $v$\;
  \Return $\sum_{\cl_1\sqcup\cl_2=[k]} \mathbb P(\cl_1,\cl_2,\lambda_1,\lambda_2) \cdot \mathrm{SCM}(\cn_{\cl_1,\cl_2}, T_{\cl_1,\cl_2},\theta_{\cl_1,\cl_2})$\;
}\
}\
\Else{$\cn$ is a star network with $k > 3$ leaves\;
  \Return $\frac{1}{\prod_{i=2}^{k}\binom{i}{2}}\sum_{T^+} c(T^+)$\;
}
\end{algorithm}

\begin{example}
\label{example: SCM_exa}
    Consider the level-2 network $\cn$ from \Cref{fig1}. 
   The file \texttt{exa3\_6.m2} in our GitHub \href{https://github.com/jcu237/SymbolicCoalescentModel}{repository},
    gives code that computes the probability of observing
    each of the 15 gene trees on $\cn$.
    This took 345.9 seconds on an HP Elitebook with an Intel core ultra 7 chip.
    For example, the probability of observing $T_{15}$ from \Cref{tab:gene_tree_ordering} of \Cref{app:TandF} is
    \begin{align*}
        \Pp(T_{15} ~|~ (\cn, (\{x_i\},\{\lambda_i\}))) &= \frac{1}{15}x_{1}^{3}x_{2}^{3}x_{4}x_{8}\lambda_1^{2}\lambda_2-\frac{2}{15}x_{1}^{3}x_{2}^{3}x_{4}x_{8}\lambda_1\lambda_2+\frac{1}{15}x_{1}^{3}x_{2}^{3}x_{6}x_{8}\lambda_1\lambda_2\\
        &-\frac{1}{15}x_{1}^{3}x_{2}^{3}x_{6}x_{8}\lambda_1+\frac{1}{15}x_{1}^{3}x_{2}^{3}x_{4}x_{8}\lambda_2-\frac{1}{15}x_{1}^{3}x_{2}^{3}x_{6}x_{8}\lambda_2\\
        &+\frac{1}{15}x_{1}^{3}x_{2}^{3}x_{6}x_{8}-\frac{1}{15}x_{1}^{3}x_{2}x_{6}x_{8}\lambda_1\lambda_2-\frac{2}{15}x_{1}^{3}x_{2}x_{8}\lambda_1^{2}\lambda_2\\
        &+\frac{1}{15}x_{1}^{3}x_{3}x_{8}\lambda_1^{2}\lambda_2+\frac{1}{15}x_{1}^{3}x_{2}x_{6}x_{8}\lambda_1+\frac{2}{15}x_{1}^{3}x_{2}x_{8}\lambda_1\lambda_2
    \end{align*}
    where $x_i = \exp(\ell_i)$. Our code is flexible in that
    it is able to compute gene tree probabilities symbolically
    or numerically.
    We can either substitute in values for the edge lengths or rerun the code with floats as parameters. For example, for parameters given by
    \begin{align*}
        (x_1,\dotsc,x_8) &= (0.248559, 0.674666, 0.239703, 0.907135,\\
        &\;\;\;\;\;\;0.661391, 0.714215, 0.236131, 0.965198), \\
        \lambda_1 &= 0.659655, \\
        \lambda_2 &= 0.959008,
    \end{align*}
    the full CF vector was computed in 342.2 seconds to be
    \begin{align*}
        (u_1,\dotsc,u_{15}) &= (0.4099358, 0.0444048, 0.0444048, 0.192466, 0.0240773,  \\
        &\;\;\;\;\;\; 0.0240773, 0.00043239, 0.00043239, 0.0254128, 0.00043239,\\
        &\;\;\;\;\;\; 0.00043239, 0.0254128, 0.207793, 0.00043239, 0.00043239)
    \end{align*}
    where $u_i = \Pp(T_i ~|~ \cn)$.
\end{example}

\section{Application to Level-1 Species Networks}
\label{sec: root identifiability}

In this section,  
 gene tree probabilities computed by \Cref{alg:SCM} are used to study algebraic varieties arising from quintet CFs for binary level-$1$ species networks. For this study, we consider only 5-taxon networks, leaving what quintet CFs on larger networks imply as future work.
We further restrict our attention to 5-taxon level-$1$ networks  with no 2-cycles.
Doing so means we consider a finite number of 5-taxon networks, while allowing 2-cycles, which can be introduced repeatedly along cut edges, would give an infinite number. A full analysis of 5-taxon networks with 2-cycles will appear in a forthcoming work.

\subsection{Topological identifiability} 
Since quartet CFs can be obtained by marginalizing quintet CFs over one of their taxa, anything identifiable from quartet CFs is also identifiable from quintet CFs. Quartet concordance factors determine both the tree of blobs \cite{TreeOfBlobs2022} and a cyclic ordering of taxon groups around each blob for outer-labeled planar networks such as level-1 networks \cite{RhodesEtAl2024}. 
We may therefore assume that the underlying tree of blobs is one of the three shown in
\Cref{fig:blob-trees}, with the depicted planar embedding indicating the circular order. 

\begin{figure}[ht]
\centering
\begin{tikzpicture}[
    >=Stealth,
    every node/.style={font=\small},
    dot/.style={circle,fill=black,inner sep=1.3pt},
    leaf/.style={dot},
    int/.style={dot},
]

\begin{scope}[shift={(0,0)}]

  \node[int] (u) at (1,0) {};
  \node[int] (v) at (2,0) {};
  \node[int] (w) at (3,0) {};

  \node[leaf,label=below left:$A$] (A) at (0.3,-0.7) {};
  \node[leaf,label=above left:$B$] (B) at (0.3,0.7) {};
  \node[leaf,label=above:$C$] (C) at (2,1) {};
  \node[leaf,label=above right:$D$] (D) at (3.7,0.7) {};
  \node[leaf,label=below right:$E$] (E) at (3.7,-0.7) {};

  \draw[line width=0.8pt] (A) -- (u) -- (B) -- (u) -- (v) -- (C) -- (v) -- (w) -- (D) -- (w) -- (E);

  \node[below] at (2,-2) {(i)};
\end{scope}

\begin{scope}[shift={(5,0)}]
    \node[int] (u) at (1,0) {};
    \node[int] (v) at (2,0) {};
    \node[leaf,label=below:$A$] (A) at (1,-1) {};
    \node[leaf,label=left:$B$] (B) at (0,0) {};
    \node[leaf,label=above:$C$] (C) at (1,1) {};
    \node[leaf,label=above right:$D$] (D) at (2.7,0.7) {};
    \node[leaf,label=below right:$E$] (E) at (2.7,-0.7) {};

    \draw[line width=0.8pt] (A) -- (u) -- (B) -- (u) -- (C) -- (u) -- (v) -- (D) -- (v) -- (E);

    \node[below] at (1,-2) {(ii)};
\end{scope}

\begin{scope}[shift={(10,0)}]
    \node[int] (u) at (0,0) {};
    \node[leaf,label=right:$A$] (A) at (1,0) {};
    \node[leaf,label=above:$B$] (B) at (0.31,0.95) {};
    \node[leaf,label=above left:$C$] (C) at (-0.81,0.59) {};
    \node[leaf,label=below left:$D$] (D) at (-0.81,-0.59) {};
    \node[leaf,label=below:$E$] (E) at (0.31,-0.95) {};

    \draw[line width=0.8pt] (A) -- (u) -- (B) -- (u) -- (C) -- (u) -- (D) -- (u) -- (E);

    \node[below] at (0,-2) {(iii)};
\end{scope}
\end{tikzpicture}
\caption{The three trees of blobs, up to taxon labeling, on 5 taxa.}\label{fig:blob-trees}
\end{figure}

For each tree of blobs in \Cref{fig:blob-trees}, a $k$-cycle may be inserted at any internal vertex of degree $k$. 
In case (iii), since the hybrid node of a 5-cycle is identifiable from quartet CFs \cite{SNaQ, Banos2019} we assume it is the parent of $A$, with taxa arranged in alphabetical order, giving a single case of a  semidirected network to further analyze.

A 4-cycle in a 5-taxon network is also identifiable from quartet CFs \cite{AllmanEtAl2024}. However, we consider all cases for the location of the 4-cycles hybrid node in case (ii) of \Cref{fig:blob-trees} giving 4 cases of a semidirected network with no 3-cycles.

The degree 3 nodes of the tree of blobs in cases (i) and (ii) of \Cref{fig:blob-trees} may represent more topological types, 
as they could arise either from single nodes (trivial blobs) or from 3-cycles with two different choices of hybrid node depending on the location of the network root. 
Quartet CFs only allow these cases to be distinguished sometimes \cite[Theorem 19 (1,2)]{AllmanEtAl2024}. 
We therefore consider all ways in which the degree 3 nodes in cases (i) and (ii) may be either retained or replaced with 3 cycles (omitting cases that cannot be rooted), with all possible rootings.

There are, however, cases where the structure of a blob placed at a degree 3 node in a 5-taxon tree of blobs is not identifiable for certain rootings of the network. This follows from a more general result we present in  the following proposition. While it generalizes
a known case for quartets and level-1 networks \cite{AllmanEtAl2024}, it applies to arbitrary networks, and to arbitrary $m$-tets. 

\begin{proposition}\label{prop:cherries-non-identifiable}
    Suppose a rooted phylogenetic network $\cn$ contains a cherry with leaves $A$ and $B$ whose parent is node $v$. Let $\cn'$ be a network obtained from $\cn$ by deleting the cherry and identifying $v$ with the top node of any 3-blob with descendant edges to $A$ and $B$.
    (See \Cref{fig:3blobid} for an example.)
    Then for any fixed common choice of edge lengths and hybridization parameters not below $u$ for $\cn$ and $\cn'$, as parameters below $u$ on $\cn$ and $\cn'$ vary over all allowed stochastic values, the vectors of probabilities under the NMSC of all gene trees for the two networks 
    range over exactly the same sets.

    Thus, for any $m$, the $m$-tet CFs for these networks have exactly the same stochastic image, and no method using $m$-tet CFs can distinguish between $\cn$ and $\cn'$.
\end{proposition}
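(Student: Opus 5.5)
The plan is to reduce everything to what happens to the lineages from $A$ and $B$ by the time they reach $u$, the parent of $v$ (so $(u,v)$ is the edge above the cherry in $\cn$, and above the 3-blob in $\cn'$). Since only one lineage per taxon is sampled, the only lineages that can be present strictly below $u$ are those from $A$ and $B$. Hence the process below $u$ can produce only two outcomes at $u$: either the two lineages $A,B$ arrive separately, or a single lineage arrives that has already formed the cherry $AB$. There is only one possible topology for a coalesced pair, which is why there is no third outcome.

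First, I will state a Markov (conditional independence) decomposition at $u$. This follows from the NMSC with independent inheritance, or equivalently from the recursion \Cref{eqn:regular-recursion}/\Cref{eqn:hybrid-recursion} applied bottom-up until only $u$'s subtree has been processed. Let $p$ be the probability that $A$ and $B$ reach $u$ uncoalesced. Let $\cn_0$ be the network above $u$ with $A$ and $B$ attached as pendant leaves at $u$. Let $\cn_1$ be the same network with a single leaf labeled $AB$ attached at $u$. Then for every gene tree $T$ on any subset $Y$ of taxa,
\[
\mathbb P(T\mid \cn)=p\,\mathbb P(T\mid \cn_0)+(1-p)\,\mathbb P(T_{AB}\mid \cn_1)\,\mathbf 1[T\text{ compatible with cherry }AB],
\]
with the obvious modification when $Y$ omits $A$ or $B$. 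In that case the term does not depend on $p$ at all, by the edge-deletion reduction at the start of \Cref{sec: algorithm}. The same formula holds for $\cn'$ with $p$ replaced by $p'$. Crucially, $\cn_0$, $\cn_1$ and their parameters (all not below $u$) are identical for the two networks. So the full gene tree probability vector depends on the parameters below $u$ only through the single scalar $p$ (resp. $p'$).

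It then remains to show that $p$ and $p'$ range over the same set. For $\cn$, the only parameter below $u$ is the length $x$ of $(u,v)$, giving $p=X=\exp(-x)$, which ranges over $(0,1)$. For $\cn'$, we have $p'=X\cdot q$, where $q$ is the probability that $A$ and $B$ do not coalesce within the 3-blob. This $q$ is a polynomial in the blob's edge probabilities and hybridization parameters: a sum over the hybrid routings of the two lineages, each term a product of edge probabilities of shared edges. It satisfies $0<q\le 1$, because all edge lengths are finite and any routing has a positive chance of no coalescence. Therefore $p'\in(0,1)$. Conversely, fixing any blob parameters and varying $X$ over $(0,1)$ already attains every value in $(0,q)$. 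Letting blob edge lengths shrink pushes $q\to1$, or alternatively the edge $(u,v)$ alone covers $(0,q)$ for each fixed $q$. Taking the union over blob parameters, we get $(0,1)$, since $q$ can be made arbitrarily close to $1$ by taking all blob edge lengths to $0$. The two images of the parameter space in the probability simplex are therefore identical. Marginalizing to any $m$-subset of taxa then gives the statement for $m$-tet CFs.

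The main obstacle is making the Markov decomposition at $u$ rigorous for a network, where lineages above $u$ may pass through hybrid nodes. The point to check is that the events below $u$ are independent of everything above $u$, given the state at $u$. I would argue this directly from the definition of the NMSC: coalescences in distinct edges and the hybrid choices of lineages are independent. Alternatively, I would run \Cref{alg:SCM}, choosing edges inside the blob (and then $(u,v)$) first, which the algorithm permits since these edges have only pendant descendants. Its first steps then produce exactly the two forests ($A,B$ separate, or $AB$) with total weights $p$ and $1-p$, and the remaining recursive calls are on $\cn_0$ and $\cn_1$. This makes the decomposition an immediate consequence of the correctness of the recursion.
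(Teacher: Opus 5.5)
Your proof is correct and follows the same route as the paper: the parameters below $u$ affect the gene tree distribution only through the probability that the $A$ and $B$ lineages fail to coalesce below $u$, and this probability ranges over exactly $(0,1)$ for both $\cn$ and $\cn'$. You spell out the conditioning at $u$, which the paper leaves implicit. Your range argument ($p'=Xq$ with $0<q\le 1$ and $\sup q=1$) is also slightly leaner than the paper's, which uses a displayed tree with $A,B$ as a cherry and extreme hybridization parameters to make the within-blob non-coalescence probability itself sweep out $(0,1)$.
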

    
\begin{figure}
    \centering
    \begin{tikzpicture}[
        >=Stealth,
        every node/.style={font=\small},
        dot/.style={circle,fill=black,inner sep=1.3pt},
        leaf/.style={dot},
        int/.style={dot},
    ]
    \begin{scope}[shift={(0,0)}]
        \node[leaf,label=below:$A$] (A) at (0,0) {};
        \node[leaf,label=below:$B$] (B) at (2,0) {};
        \node[int, label=right:$v$] (v) at (1,1) {};
        \node[int, label=right:$u$] (u) at (1,2) {};
        \draw[->,line width=0.8pt] (u) -- (v);
        \draw[->,line width=0.8pt] (v) -- (A);
        \draw[->,line width=0.8pt] (v) -- (B);
    \end{scope}

    \begin{scope}[shift={(5,0)}]
        \node[leaf,label=below:$A$] (A) at (0,0) {};
        \node[leaf,label=below:$B$] (B) at (2,0) {};
        \node[int,label=left:$v_4$] (v1) at (0.5,0.5) {};
        \node[int,label=right:$v_3$] (v2) at (1.5,0.5) {};
        \node[int,label=right:$v_1$] (v3) at (1,1.5) {};
        \node[int,label=left:$v_2$] (v4) at (0.75,1) {};
        \node[int,label=right:$u$] (u) at (1,2) {};

        \draw[->,line width=0.8pt] (u) -- (v3);
        \draw[->,line width=0.8pt] (v3) -- (v2);
        \draw[->,line width=0.8pt] (v3) -- (v4);
        \draw[->,line width=0.8pt] (v4) -- (v1);
        \draw[->,line width=0.8pt] (v4) -- (v2);
        \draw[->,line width=0.8pt] (v2) -- (v1);
        \draw[->,line width=0.8pt] (v1) -- (A);
        \draw[->,line width=0.8pt] (v2) -- (B);
    \end{scope}
    \end{tikzpicture}
    \caption{On the left, a cherry with taxa $A$ and $B$ having parent node $v$, and on the right, a blob placed in this cherry at the location of $v$, as described in \Cref{prop:cherries-non-identifiable}} \label{fig:3blobid}
\end{figure}

\begin{proof}
Consider any fixed 3-blob for $\cn'$. It must contain as a displayed rooted tree one in which $A,B$ form a cherry with a (composite) edge leading from their parent to $v$. By choosing hybridization parameters near 0 or 1 and edge lengths along this composite edge appropriately, we see the probability that the $A,B$ lineages do not coalesce in the 3-blob  ranges over all of  $(0,1)$ as all 3-blob parameters are varied. The probability that two lineages in the edge  $(u,v)$ of $\cn'$ do not coalesce also ranges over $(0,1)$ as the length is varied. 
The probability of non-coalescence below $u$ for the $A,B$ lineages is the product of these probabilities, and thus also ranges over $(0,1)$. But this is exactly the same as for $\cn$.
\end{proof}

Under the above assumptions and after removing networks with 3-cycles located on cherries of the rooted network as in \Cref{prop:cherries-non-identifiable},
there are $108$ rooted topological networks with tree of blobs (i), 
$73$ with tree of blobs (ii),
and $9$ with tree of blobs (iii)
for which we compute the quintet CF ideals $I_\cn$.

This count, and the specific networks, for tree of blobs (i) is obtained as follows. 
There are three degree-$3$ vertices where a $3$-cycle could be inserted: the vertex adjacent to the cherry ${A,B}$, the central vertex adjacent to $C$, and the vertex adjacent to the cherry ${D,E}$. 
By \Cref{prop:cherries-non-identifiable}, any $3$-cycle inserted at one of the cherry vertices can only be identifiable if the root lies on that $3$-cycle or one of the edges pendant to the $3$-cycle. 
From here on, we refer to the cherry edges or the collection of the five 3-cycle and pendant edges as an \emph{$AB$- or $DE$-cluster}.
Thus, if the network contains a single $3$-cycle, we may assume either that this $3$-cycle is placed at the central vertex, or that the root lies in the corresponding $3$-cycle cluster. 
The remaining cases are equivalent, at the level of concordance factors, to replacing the $3$-cycle by a trivial blob.

Similarly, if for tree of blobs (i) the network contains two $3$-cycles, then at least one must be placed at a cherry vertex.
If neither $3$-cycle is placed at the central vertex, then both occur at cherry vertices, but only the cluster
containing the root might be identifiable
by \Cref{prop:cherries-non-identifiable}, this reduces to a case with a single $3$-cycle. 
Therefore, the only genuinely new cases of two $3$-cycles are those in which one $3$-cycle is placed at the central vertex and the root lies in the other $3$-cycle cluster. 
Finally, networks with three $3$-cycles can be disregarded: the two $3$-cycles at cherry vertices cannot both contribute identifiable structures by \Cref{prop:cherries-non-identifiable}.

With these simplifications, taking into account the root location and hybrid node location in 3-cycles  for case (i) yields 7 networks with no 3-cycles, 49 with one 3-cycle and 52 with two 3-cycles.

For tree of blobs (ii), there is only one degree-$3$ vertex, namely the vertex adjacent to the cherry ${D,E}$. 
Consequently, we only need to consider a 3-cycle on networks of this type when the root lies in the $DE$-cluster; 
otherwise, the $3$-cycle is invisible to concordance factors and the network reduces to the corresponding case with a trivial blob at that vertex.
Accounting for root and hybrid node location gives 39 networks of type (ii) with a 3-cycle and 34 without.

The count for networks of type (iii), given that $A$ is the hybrid child is 9, since there are 9 edges on which the root might be located.

\subsection{Setup of computation}
Throughout, we fix an ordering of gene trees $T_1,\dotsc,T_{15}$ (see \Cref{tab:gene_tree_ordering} of \Cref{app:TandF}) with taxon set $\{A,B,C,D,E\}$, and for a network $\cn$, let $E(\cn)$ denote the edges of $\cn$ and $H(\cn)\subseteq E(\cn)$ denote the hybrid edges of $\cn$.

For each of the 190 rooted networks, we computed $u_i = \Pp(T_i ~|~ (\cn, (\{x_e\}, \{\lambda_f\})))$ symbolically via \Cref{alg:SCM}. These define a ring homomorphism
\begin{align*}
    \varphi_\cn : \Cc[u_1,\dotsc,u_{15}] &\to \Cc[x_e, \lambda_f ~|~ e \in E(\cn), f\in H(\cn) ], \\
    u_i &\mapsto \Pp(T_i ~|~ (\cn, (\{x_e\},\{\lambda_f\}))).
\end{align*}
We would like to compute the $\ker \varphi_\cn$ directly, but
for many of the networks under consideration, the necessary Gr{\"o}bner basis calculation
does not terminate on machines available to us.
Because $\varphi_\cn$ is not a graded map,
some techniques from computational algebra are not directly applicable.
To rectify this situation, we introduce
homogenizing variables: $u_0$ (to the domain) and $t$ (to the codomain).
We therefore consider the ring homomorphism.
\begin{align*}
    \varphi_\cn^{\text{hom}} : \Cc[u_0,\dotsc,u_{15}] &\to \Cc[t,x_e,\lambda_f ~|~ e \in E(\cn), f\in H(\cn)], \\
    u_0 &\mapsto t, \\
    u_i &\mapsto \varphi_\cn(u_i) \cdot t ,\hspace{1.5cm} 1\le i\le 15.
\end{align*}
Then $\varphi_\cn^{\text{hom}}$ is $\Zz$-graded with $\deg(u_i) = 1$, $\deg(t) = 1$, and $\deg(x_e) = \deg(\lambda_f) = 0$.
In particular, the $\ker \varphi_\cn^{\text{hom}}$ is $\Zz$-graded,
so the \texttt{Macaulay2} \cite{M2} package \texttt{MultigradedImplicitization} \cite{CummingsHollering2026} can be used to compute all elements of $\ker \varphi_\cn^{\text{hom}}$ of low degree.

\begin{remark}\label{rem:homogenization_is_okay}
    The ideals $\ker \varphi_\cn$ and $\ker \varphi_\cn^{\text{hom}}$ are closely related. From a generating set of $\ker \varphi_\cn^{\text{hom}}$, one can recover a generating set for $\ker \varphi_\cn$ by substituting $u_0 \mapsto 1$.
    Thus, no information is lost through homogenization.
\end{remark}

Using $\varphi_\cn^{\text{hom}}$ we computed a minimal generating set up to degree 4 for all $190$ networks, and sorted the networks by the results.
This gives a first pass at the distinguishability results presented in the next subsection.
But unlike a Gr{\"o}bner basis computation, these computations do not produce a certificate that the full ideal has been found, so
further investigation is required.

For each class of networks with the same ideal up to degree 4, we checked whether the corresponding ideal was prime and had the same dimension as the rank of the Jacobian matrix for the parameterizations of each network in the class. 
If the ideal is prime and the dimension matches all the ranks, this proves that the full kernel has been found,
and no further investigation of the class is needed.
In order to be self-contained, we include a sketch of the proof of this fact below, as \Cref{lem:primedim,lem:Jacdim}.

If, on the other hand, the dimension and ranks did not match or the ideal was not prime, then we needed to do more to determine whether networks in this particular class are distinguishable. 
Our first step was to use the linear invariants (which are the same for all networks in a class), to reduce the number of variables in the domain. 
In some cases, this reduction allowed Gr{\"o}bner computations to finish, while in others, it allowed us to compute higher degree invariants using \texttt{MultigradedImplicitization}.
In a few cases, both of these methods failed, but we were able to use algebraic matroids, as implemented in the \texttt{Matroids} package 
\cite{MatroidsSource, MatroidsArticle},
to distinguish between networks in the class \cite{HolleringSullivant2021}. 
After all these tests were done, we arrive at the distinguishability statements found in the theorems of the following subsection.

\begin{remark}
    Since the parameterizations of $\varphi_\cn$ only involve rational coefficients,
    computations done in a computer algebra system working over the rationals
    is sufficient for finding invariants for the full ideals defined over the complex  (see \Cref{lem:genset} below).
    In particular, say we find generators of $\ker \varphi_\cn$ up to some finite degree $d$, 
    and we let $J_d$ be the ideal generated by the invariants in $\ker \varphi_\cn$ up to degree $d$.
    Then if $J_d$ is prime in the polynomial ring with rational coefficients and $\dim J_d$ is equal to the rank of the Jacobian, then $J_d = \ker \varphi_\cn$.
    Importantly, these steps can often be carried out on a computer and constitute a certificate that a complete generating set has been found.
\end{remark}
    
\medskip

 The following lemmas justify the approach described above when Gr\"obner computations failed to terminate. 

\begin{lemma}\label{lem:primedim}
    Let $I \subseteq J$ be prime ideals in a Noetherian ring $R$. If $\mathrm{dim}(I) = \mathrm{dim}(J)$, then $I = J$. 
\end{lemma}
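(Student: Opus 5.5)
The plan is to argue by contradiction through Krull dimension, using the fact that prime ideals contain chains of primes that can be extended. Recall that $\dim(I)$ means the Krull dimension of $R/I$: the supremum of lengths $n$ of chains of primes $P_0 \subsetneq P_1 \subsetneq \cdots \subsetneq P_n$ in $R$ with $I \subseteq P_0$. In our application $R$ is a polynomial ring over a field, so these dimensions are finite. I will state this finiteness hypothesis explicitly, or note that the lemma is only applied in that setting. Without it the statement can fail if both dimensions are infinite.

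First, suppose $I \subseteq J$ are primes with $I \neq J$, so $I \subsetneq J$. Let $d = \dim(J) < \infty$. Choose a chain of primes $J = P_0 \subsetneq P_1 \subsetneq \cdots \subsetneq P_d$ realizing $\dim(R/J)$. We may start the chain at $J$ itself: any maximal-length chain over $J$ can be taken to begin with $J$, since $J$ is prime and prepending it only lengthens a chain. Then prepend $I$ to obtain
\[
I \subsetneq J = P_0 \subsetneq \cdots \subsetneq P_d,
\]
a chain of length $d+1$ of primes containing $I$. Hence $\dim(I) \ge d+1 > \dim(J)$, contradicting the hypothesis $\dim(I)=\dim(J)$. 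Therefore $I = J$.

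The only delicate point is the finiteness of the dimension. For a Noetherian ring in general, Krull dimension can be infinite (Nagata's example), and the argument then says nothing. So I would phrase the proof for finite-dimensional $R$, or simply remark that $R=\mathbb{Q}[u_0,\dots,u_{15}]$ or its complex analogue here, whose dimension is finite. In the intended use, $I = J_d$ is the ideal generated by the low-degree invariants found and $J=\ker\varphi_\cn$, which is prime as the kernel of a map to a domain. Note that the lemma needs only $I$ and $J$ to be prime, and uses no further Noetherian input beyond this finiteness.
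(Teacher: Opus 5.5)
Your proof is correct. It differs from the paper's mainly in which side of the chain of primes it measures.

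The paper argues via height: a strict inclusion $I \subsetneq J$ of primes gives $\operatorname{ht} J \ge \operatorname{ht} I + 1$ (citing Eisenbud, Ch.~9). On that side finiteness is automatic, since primes in a Noetherian ring have finite height. However, the lemma is about $\dim(I)=\dim R/I$, which is how the paper itself uses it in the next lemma. Turning the height inequality into a dimension statement therefore silently uses $\dim R/P + \operatorname{ht} P = \dim R$. That identity holds for affine domains such as the polynomial rings where the lemma is applied, but not for arbitrary Noetherian rings.

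You instead prepend $I$ to a maximal chain of primes lying above $J$. This works directly with the quantity in the statement and needs only $\dim R/J < \infty$, with no Noetherian or catenarity input.

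Your finiteness caveat is genuinely needed. Take $R' = R[y]$ with $R$ Nagata's infinite-dimensional Noetherian domain. The primes $0 \subsetneq yR'$ both have infinite-dimensional quotients ($R'$ and $R'/yR' \cong R$), so the lemma as literally stated fails. In the intended setting of polynomial rings over a field, both your proof and the paper's are valid.
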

\begin{proof}
    Strict inclusion of prime ideals increase height \cite[Chapter~9]{EisenbudCA}, 
    so $I = J$.
\end{proof}

\begin{lemma}\label{lem:Jacdim}
    Let $k$ be a field of characteristic 0.
    Let $\varphi : k[x_1,\dotsc,x_n] \to k[\theta_1,\dotsc,\theta_m]$ be a $k$-algebra homomorphism, and let $\phi : k^m \to k^n$ be the associated polynomial map of the form
    \[
        \phi(\theta_1,\dotsc,\theta_m) = (\phi_1(\theta),\dotsc,\phi_n(\theta)).
    \]
    Then $\dim(\ker \varphi)$ is the rank of the Jacobian of $\phi$
    \[
        J(\phi) = \left(\frac{\partial \phi_j}{\partial \theta_i} \right), \; 1\leq i \leq m, \; 1\leq j\leq n
    \]
\end{lemma}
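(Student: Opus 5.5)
We prove \Cref{lem:Jacdim} by identifying $\dim(\ker\varphi)$ with a transcendence degree and then computing that transcendence degree with differentials. Since $\ker\varphi$ is the kernel of a map into an integral domain, it is prime. The quotient $k[x_1,\dotsc,x_n]/\ker\varphi$ is isomorphic to the image subalgebra $A = k[\phi_1,\dotsc,\phi_n]\subseteq k[\theta_1,\dotsc,\theta_m]$. For a finitely generated $k$-domain, Krull dimension equals the transcendence degree of the fraction field over $k$ (Noether normalization, \cite{EisenbudCA}). Hence
\[
\dim(\ker\varphi)=\dim A=\operatorname{trdeg}_k k(\phi_1,\dotsc,\phi_n),
\]
and it remains to show that this transcendence degree equals $\operatorname{rank} J(\phi)$. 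Here rank means rank over the field $k(\theta)$, equivalently the generic rank.

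The key step is the Jacobian criterion for algebraic independence in characteristic $0$. We claim that a subset $\{\phi_j\}_{j\in S}$ is algebraically independent over $k$ if and only if the columns of $J(\phi)$ indexed by $S$ are linearly independent over $k(\theta)$. Granting the claim, a maximal algebraically independent subset (a transcendence basis drawn from the generators) matches a maximal linearly independent set of columns. This gives the lemma, and it is also the content underlying \Cref{prop: alg-matroid}, so one could simply cite \cite{rosen2014computing} at this point. For self-containment we would prove both directions of the claim directly.

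\emph{Dependent implies columns dependent.} Suppose $\{\phi_j\}_{j\in S}$ is algebraically dependent, and choose a nonzero $F$ of minimal total degree with $F(\phi_S)=0$. Differentiating with respect to each $\theta_i$ gives $\sum_{j\in S}(\partial_j F)(\phi)\,\partial\phi_j/\partial\theta_i=0$. Since $\operatorname{char} k=0$ and $F$ is nonconstant, some $\partial_jF$ is a nonzero polynomial of smaller degree. By minimality, $(\partial_jF)(\phi)\neq0$, so these coefficients give a nontrivial linear relation among the columns.

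\emph{Independent implies columns independent.} This is the step I expect to be the main obstacle. The plan is to use derivations. Let $L=k(\phi_S)\subseteq K=k(\theta)$, with $\phi_S$ a transcendence basis of some intermediate field. The $k$-derivations $\partial/\partial\phi_j$ on the purely transcendental field $L$ are defined. We then extend $\{\phi_S\}$ to a transcendence basis of $K/k$ by adding some $\theta$'s, making $K$ algebraic, and in characteristic $0$ separable, over $k(\phi_S,\theta_{S'})$. Each derivation $\partial/\partial\phi_j$ then extends uniquely to $K$, giving $D_j$ with $D_j(\phi_l)=\delta_{jl}$. Any $k$-derivation of $K$ is determined by its values on the $\theta_i$, so $D_j=\sum_i D_j(\theta_i)\,\partial/\partial\theta_i$. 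Applying this to $\phi_l$ gives $\delta_{jl}=\sum_i D_j(\theta_i)\,\partial\phi_l/\partial\theta_i$. So the $|S|\times m$ matrix $(D_j(\theta_i))$ is a left inverse of the $S$-columns of $J(\phi)$, and those columns are independent.

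Combining the two directions, $\operatorname{rank}J(\phi)=\operatorname{trdeg}_k k(\phi)=\dim(\ker\varphi)$.
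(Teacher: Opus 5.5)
Your proof is correct and takes the same route as the paper. You identify $\dim(\ker\varphi)$ with $\mathrm{trdeg}_k\, k(\phi_1,\dotsc,\phi_n)$ using the dimension theorem for affine domains, and then identify that transcendence degree with the generic rank of $J(\phi)$. The paper handles this second step by citing Eisenbud's Theorem 16.14 on differentials, whereas you prove the characteristic-$0$ Jacobian criterion directly, using minimal-degree relations in one direction and extension of derivations in the other. That is fine, with one small precision: first define $\partial/\partial\phi_j$ on $k(\phi_S,\theta_{S'})$ by declaring it zero on $\theta_{S'}$, and only then does it extend uniquely to $K$.
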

\begin{proof}
    Recall that the dimension of $\ker \varphi$ is the Krull dimension of the quotient ring
    \[
        k[x_1,\dotsc,x_n]/\ker \varphi.
    \]
    By the description of differentials of field extensions
    \cite[Theorem~16.14]{EisenbudCA}, the generic rank of the Jacobian
    of the parametrization is 
    \[
        \mathrm{trdeg}_k k(\phi_1,\dotsc,\phi_n),
    \]
    which equals the Krull dimension $\dim(k[x_1,\dotsc,x_n]/\ker\varphi)$
    by the dimension theorem for affine domains
    \cite[Theorem~A]{EisenbudCA}.
\end{proof}

\begin{lemma}\label{lem:genset}
    Let $\varphi_k : k[x_1,\dotsc,x_n] \to k[\theta_1,\dotsc,\theta_m]$ be a $k$-algebra homomorphism, let $K$ be a field extension, and let $I_k$ be the kernel of $\varphi_k$. 
    Consider the map $\varphi_K$ defined by extension of scalars.
    Then $\ker \varphi_K = I_k \otimes_k K$, i.e. a generating set of $I_k$ (lifted to the larger ring) is a generating set for $\ker \varphi_K$.
\end{lemma}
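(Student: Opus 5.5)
We have to prove that extension of scalars commutes with taking kernels: $\ker\varphi_K = I_k\otimes_k K$, meaning the ideal of $K[x_1,\dotsc,x_n]$ generated by $I_k$. The plan is to use flatness of $K$ over $k$. Any field extension is flat, since $K$ is a free $k$-module: choose a $k$-basis $\{e_\alpha\}$ of $K$.

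\emph{Step 1: set up an exact sequence.} Regard $\varphi_k$ as a $k$-linear map and write the exact sequence of $k$-vector spaces
\[
0 \to I_k \to k[x_1,\dotsc,x_n] \xrightarrow{\varphi_k} k[\theta_1,\dotsc,\theta_m].
\]

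\emph{Step 2: tensor with $K$.} Tensoring over $k$ with $K$ preserves exactness, giving
\[
0 \to I_k\otimes_k K \to K[x_1,\dotsc,x_n] \xrightarrow{\varphi_K} K[\theta_1,\dotsc,\theta_m].
\]
Here we identify $k[x]\otimes_k K \cong K[x]$ and $k[\theta]\otimes_k K\cong K[\theta]$, and observe that $\varphi_k\otimes \mathrm{id}_K=\varphi_K$. Both maps are $K$-linear and agree on monomials with coefficient in $K$, since $\varphi_K$ is defined by the same images of the $x_i$.

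\emph{Step 3: identify the image of $I_k\otimes_k K$.} Its image in $K[x]$ is the $K$-span of $I_k$. Since $I_k$ is an ideal of $k[x]$, this span is closed under multiplication by monomials and by $K$, so it is exactly the ideal $I_k K[x]$ of $K[x]$. Consequently, any generating set of $I_k$ also generates $\ker\varphi_K$.

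\emph{Direct alternative.} If one prefers to avoid flatness, there is an elementary argument. Given $f\in\ker\varphi_K$, expand $f=\sum_\alpha e_\alpha f_\alpha$ with $f_\alpha\in k[x]$. Then
\[
0=\varphi_K(f)=\sum_\alpha e_\alpha\,\varphi_k(f_\alpha).
\]
Comparing coefficients of each $\theta$-monomial, and using that the $e_\alpha$ are linearly independent over $k$, gives $\varphi_k(f_\alpha)=0$ for every $\alpha$. Hence each $f_\alpha\in I_k$, and so $f\in I_kK[x]$. The reverse inclusion is clear.

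The only step needing care is Step 3: checking that the image of $I_k\otimes_k K$ is the extended ideal, and that the identification $\varphi_K=\varphi_k\otimes\mathrm{id}$ is compatible with this. The direct basis argument handles both cleanly, so I would present that version.
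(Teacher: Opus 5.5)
Your proposal is correct, and your Steps 1--2 are exactly the paper's proof: $K$ is free and hence flat over $k$, so tensoring the left exact sequence $0\to I_k\to k[x_1,\dotsc,x_n]\to k[\theta_1,\dotsc,\theta_m]$ with $K$ stays exact. Your Step 3, identifying the image of $I_k\otimes_k K$ with the extended ideal $I_kK[x_1,\dotsc,x_n]$ and $\varphi_k\otimes\mathrm{id}_K$ with $\varphi_K$, makes explicit what the paper uses without comment; the basis argument you would present is this same freeness written out concretely, and it is valid as written, including the reverse inclusion.
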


\begin{proof}
    Any field extension $K/k$ is a $k$-vector space. In particular, $K$ is free over $k$ and thus flat over $k$.  Thus tensoring the left exact sequence
    \[
        0 \to I_k \to k[x_1,\dotsc,x_m] \xrightarrow[]{\varphi_k} k[\theta_1,\dotsc,\theta_m]
    \]
    with $K$, gives the left exact sequence
    \[
        0 \to I_k \otimes_k K \to K[x_1,\dotsc,x_m] \xrightarrow[]{\varphi_K} K[\theta_1,\dotsc,\theta_m].
    \]
    It follows that $I_k \otimes_k K$ is equal to $I_K$.
\end{proof}

\subsection{Results}
It is important to emphasize that these results should not all be interpreted as sharp identifiability statements. 
Rather these computations provide proofs that certain classes of networks definitively produce concordance factors satisfying different algebraic relations. In this case the networks are \emph{algebraically distinguishable} from one another. However, for other classes there may be missing distinguishability statements. We will point out where we think this is most likely. 

Moreover, we are only considering algebraic relations among the concordance factor vectors; however, these are really semialgebraic objects meaning that we should really be working over $\Rr$ and considering inequalities.
It may be the case that even if the ideals associated to two networks are the same,
they differ once we restrict to the real paramater space $\Theta_\cn$ described in \Cref{sec: background}.

Our first theorem pertains to the tree of blobs (i) from \Cref{fig:blob-trees}.
For each network with tree of blobs (i),
the ideals generated by invariants up to degree four were prime (over the rationals) and had the expected dimension in every case examined. 
Consequently, these computations recover the full vanishing ideals, and hence that the corresponding algebraic identifiability statements are complete.

\begin{theorem}
    \label{thm:blob-tree-i} Let $\cn$ be a rooted level-1 binary network without 2-cycles whose tree of blobs is the fully resolved tree on five taxa pictured in \Cref{fig:blob-trees} (i). 
    Then the following statements of algebraic identifiability from quintet CFs for generic parameter values hold.

    \begin{enumerate}
        \item \textbf{Whether a $3$-cycle is present at the central vertex,}
         i.e.\ the vertex separating $C$ from the cherries $AB$ and $DE$ in the tree of blobs, can be determined.

        \item \textbf{If a $3$-cycle is present at the central vertex,} then
        whether or not other 3-cycles (at the AB and DE cherries) are present cannot be determined. One can determine if
        \begin{enumerate}
        \item the root lies on either of the internal cut edges adjacent to the central 3-cycle,
        \item the root lies on an edge of the central 3-cycle or the pendant edge to $C$ in the semidirected network, but cannot distinguish between these possibilities,
        \item the root lies in the $AB$-cluster (resp. the $DE$-cluster), i.e, on the pendant edges to the two taxa or on an adjacent $3$-cycle, but cannot distinguish between these possibilities.
        \end{enumerate}

        \item \textbf{If no $3$-cycle is present at the central vertex,}
        \begin{enumerate}
        \item If the root lies within the $AB$-cluster (resp.\ $DE$-cluster), then the presence or absence  of a $3$-cycle  adjacent to those taxa is determined. If the root does not lie in such a cluster, then the presence of a an additional $3$-cycle cannot be determined.
            \item The root location is identifiable except when it lies in the $AB$-cluster (resp. $DE$-cluster) and a $3$-cycle is present in the cluster. In this case the root can only be determined to be in the cluster.
        \end{enumerate}

        \item \textbf{Hybrid nodes within $3$-cycles are not identifiable} in all cases, except as constrained to two or three nodes by the location of the root.
    \end{enumerate}
\end{theorem}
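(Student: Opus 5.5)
The proof will be computational, organized around the 108 rooted networks with tree of blobs (i) enumerated above. The reduction via \Cref{prop:cherries-non-identifiable} has already been applied: 3-cycles on cherries not containing the root are replaced by trivial blobs. For each network $\cn$ I will compute the quintet CF parameterization $\varphi_\cn$ with \Cref{alg:SCM}, homogenize to $\varphi_\cn^{\text{hom}}$, and use \texttt{MultigradedImplicitization} to obtain a minimal generating set $J_4$ of all invariants of degree at most 4. The networks are then partitioned into classes according to $J_4$.

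The essential certification step follows. For each class I will check that $J_4$ is prime over $\mathbb Q$ and that its Krull dimension equals the generic rank of the Jacobian of the parameterization of every network in the class. This rank will be computed by evaluating the Jacobian at random rational parameter points, where a full-rank evaluation gives a lower bound. By \Cref{lem:Jacdim} that rank equals $\dim\ker\varphi_\cn$. Since $J_4\subseteq \ker\varphi_\cn$ are both prime of equal dimension, \Cref{lem:primedim} gives $J_4=\ker\varphi_\cn$, and \Cref{lem:genset} transfers this equality from $\mathbb Q$ to $\Cc$, with \Cref{rem:homogenization_is_okay} handling the dehomogenization.

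Consequently, two networks in different classes have distinct prime ideals, hence distinct irreducible varieties, and so are generically distinguishable. Two networks in the same class have identical varieties and are algebraically indistinguishable.

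It then remains to read off the partition and match it to the statement. Item (1) holds if no class mixes networks with and without a central 3-cycle. Item (2) follows by checking that, among networks with a central 3-cycle, the classes are exactly: root on the left cut edge; root on the right cut edge; root on the central cycle or on the $C$ pendant edge; root in the $AB$-cluster; and root in the $DE$-cluster. Each of these classes absorbs the presence or absence of a cherry 3-cycle and the choice of hybrid node. Item (3) follows in the same way, with singleton classes by root edge when there is no cluster cycle, and with each cluster-with-3-cycle forming its own class distinct from the tree-rooted-in-cherry classes. For item (4), I will check that varying the hybrid node among the positions compatible with the root never changes the class. This indistinguishability claim is exact, because the ideals coincide; when the parameterizations are literally identical after a relabeling of parameters, even the semialgebraic images agree.

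The main obstacle is computational. For networks with two 3-cycles the degree-4 implicitization may be expensive, and verifying primality of $J_4$ may not terminate. For such cases I would first eliminate variables using the sum-to-one and symmetry linear invariants shared by the class, and then retry both the Gröbner computation and the primality test. If primality still cannot be certified, I would fall back on \Cref{prop: alg-matroid}: to separate two classes it suffices to exhibit a circuit of one algebraic matroid that is not a circuit of the other, computed from Jacobian column ranks. The preceding text reports that for tree of blobs (i) the degree-4 ideals were all prime of the expected dimension, so I expect this fallback not to be needed here.
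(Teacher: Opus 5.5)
Your proposal is correct and matches the paper's proof. The paper likewise computes all invariants up to degree 4 from the homogenized parameterization, sorts the 108 networks by these ideals, certifies that they are the full vanishing ideals via primality over $\mathbb{Q}$ together with equality of dimension and Jacobian rank (\Cref{lem:primedim,lem:Jacdim,lem:genset}), and then reads the statement off the resulting classes in \Cref{tab:blob-trees-i-ideals}. Your random-point Jacobian evaluation is also sound: the containment $J_4\subseteq\ker\varphi_\cn$ means a lower bound on the generic rank suffices, and, as you expected, no matroid or variable-reduction fallback is needed for this tree of blobs.
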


\begin{proof}
    The proofs of the statements above are all entirely computational. 
    All invariants were computed up to degree 4 and sorted accordingly. 
    In all cases, the ideals were prime and their dimension was equal to the rank of the Jacobian; thus, we conclude that these are the full vanishing ideals.
    The necessary computations can be found on our \href{https://github.com/jcu237/SymbolicCoalescentModel}{GitHub repository}.
\end{proof}

In contrast to the completeness of the previous theorem, the theorem below should be viewed as a partial result derived from the currently computable invariants. 
In those cases, the ideals generated in degree at most four do not capture the entire algebraic structure of the models in many cases. 
We had to use algebraic matroids for distinguishing the root locations in parts (3) and (4) of \Cref{thm:blob-tree-ii}.
Moreover, in part (2) of \Cref{thm:blob-tree-ii}, we have numerical evidence that the root location can actually be identified; however, we do not have any certifiable computations which prove this fact.
Thus, the non-identifiability statements in this theorem should be interpreted cautiously.
They reflect indistinguishability with respect to the currently known low-degree invariants and matroidal information rather than definitive proofs of algebraic equivalence.

\begin{theorem}
\label{thm:blob-tree-ii}
Let $\cn$ be a rooted level-1 binary network without 2-cycles whose tree of blobs is the tree on five taxa  with a 4-multifurcation, with the 4-blob having circular order shown in \Cref{fig:blob-trees} (ii). 
 
Then the following statements of algebraic identifiability from quintet CFs hold.

\begin{enumerate}
\item \textbf{The hybrid node on the $4$-cycle} is identifiable.

\item \textbf{When the hybrid node in the $4$-cycle is ancestral to $D$ and $E$,} the
location of the root may be on the pendant edges to $A,B,$ or $C$, or on the $4$-cycle, and there may or may not be a 3-cycle in the $DE$-cluster. None of these cases can be distinguished.

\item \textbf{When the hybrid node in the $4$-cycle is ancestral to $A$ or $C$,}
then a 3-cycle in the $DE$-cluster is not detectable, unless the root is in this cluster.
The location of the root is identifiable with one exception: when the root is in the $DE$-cluster and there is a 3-cycle,  the root can only be determined to be in the $DE$-cluster.

\item \textbf{When the hybrid node in the $4$-cycle is ancestral to $B$,}
then a $3$-cycle in the $DE$-cluster is not detectable, unless the root is in this cluster.
If the root is not in the $DE$-cluster, it's location can only be determined to be in one of three regions:
\begin{enumerate}
    \item It may be on the edge pendant to $C$ or either of the two $4$-cycle edges adjacent to that edge.
    \item It may be on the edge pendant to $A$ or either of the two $4$-cycle edges adjacent to that edge.
    \item It may be on the cut-edge separating the $DE$-cluster from the $4$-cycle.
\end{enumerate}
If the root is in the $DE$-cluster and there is no $3$-cycle there, then the root location is identifiable; otherwise, if there is a $3$-cycle, then the root location is only known up to the $DE$-cluster.

\item \textbf{3-cycle identifiability.}
The presence or absence of a 3-cycle is only known if the root is in the $DE$-cluster. In all cases, if there is a 3-cycle present, the hybrid location cannot be determined, beyond constraints imposed by the root location.
\end{enumerate}
\end{theorem}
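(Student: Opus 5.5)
The plan follows the computational pipeline described in the setup subsection, applied to the 73 rooted networks with tree of blobs (ii). For each such $\cn$, I will compute the symbolic quintet CF parameterization $\varphi_\cn$ with \Cref{alg:SCM}. I will then pass to the homogenized map $\varphi_\cn^{\text{hom}}$ and use \texttt{MultigradedImplicitization} to find a minimal generating set of $\ker\varphi_\cn^{\text{hom}}$ through degree 4. Setting $u_0=1$, as justified by \Cref{rem:homogenization_is_okay}, and working over $\mathbb{Q}$, as justified by \Cref{lem:genset}, the networks are sorted into classes according to their low-degree ideals. Two networks in different classes are algebraically distinguishable, since they differ on a polynomial of degree $\le 4$ that vanishes on the image of one parameterization but not the other.

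This first step directly yields the positive statements. For (1), the four placements of the 4-cycle hybrid node should fall into disjoint families of classes. For (3), with the hybrid ancestral to $A$ or $C$, each root location (modulo the $DE$-cluster exception) should be its own class. For (5), networks with and without a 3-cycle in the $DE$-cluster should separate exactly when the root lies in that cluster.

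For each class whose degree $\le 4$ ideal $J_4$ turns out to be prime, with $\dim J_4$ equal to the generic Jacobian rank of every member, \Cref{lem:primedim,lem:Jacdim} certify $J_4=\ker\varphi_\cn$ for all members. The class is then a genuine indistinguishability class. This should happen at least for the merged classes in (3), (4) and (5) coming from \Cref{prop:cherries-non-identifiable}-type phenomena, namely a 3-cycle in the $DE$-cluster with the root elsewhere, and the hybrid node ambiguity within a 3-cycle. Those non-identifiability statements also follow directly from \Cref{prop:cherries-non-identifiable}, which gives identical stochastic images.

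Classes that are not certified will be split further. First, I will use the shared linear invariants to eliminate variables and retry a Gr\"obner basis or higher-degree implicitization. If that fails, I will compute the algebraic matroids from the Jacobians via \Cref{prop: alg-matroid} and exhibit a circuit of one network's matroid that is independent for another. This is how I expect the root regions in (3) and (4) to be separated: for the $B$-hybrid case, the three regions around $C$, around $A$, and at the cut edge should have distinct matroids. For classes where neither method separates the members, as in (2) with the hybrid above $D,E$, I will record them as not distinguished by the available invariants and matroids. This matches the cautious reading stated before the theorem.

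The main obstacle is the set of large classes in (2) and (4), where the Gr\"obner computation does not terminate and the low-degree ideal is not prime of the right dimension. There I would try the matroid test first, because it only needs Jacobian ranks of submatrices over the fraction field, which can be computed by random numerical specialization.
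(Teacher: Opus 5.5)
Your plan is essentially the paper's own, purely computational, proof: degree-$\le 4$ invariants from $\varphi_\cn^{\text{hom}}$, certification of complete ideals by primality and Jacobian rank via \Cref{lem:primedim,lem:Jacdim}, then linear reduction with Gr\"obner bases and finally algebraic-matroid circuits (the two fallbacks the paper actually needs for the $4$-cycle rootings in case (3)), with part (2) established only relative to the invariants actually found, as the paper itself concedes. Two small cautions: \Cref{prop:cherries-non-identifiable} covers the $3$-cycle ambiguities only when the root lies outside the $DE$-cluster, since with the root inside it the root and hybrid ambiguities rest on the certified common ideals of the $\cn_{11},\cn_{12},\cn_{13}$ classes; and evaluating the Jacobian at random points certifies independence but not dependence, so exhibiting a circuit needs an exact symbolic rank computation (or an explicit invariant supported on that set).
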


\begin{proof}
    Again the proof is entirely computational, 
    and all necessary computations can be found in our GitHub repository.
    The invariants up to degree 4 are enough to determine the location of the hybrid node in the 4-cycle.
    In case (3), there are 2 root locations on the 4-cycle
    where algebraic matroids are used to distinguish the varieties,
    and the other 2 root locations on the 4-cycle were distinguished by reducing the number of variables using the linear invariants and then computing full Gr{\"o}bner bases.
See \Cref{fig:blob-tree-ii-rootings} and \Cref{tab:table-blob-tree-ii-ideal-info} of \Cref{app:TandF} for details.
    
\end{proof}

\begin{figure}[ht]
\begin{tikzpicture}[>=Stealth, every node/.style={font=\small},dot/.style={circle,fill=black,inner sep=1.3pt},leaf/.style={dot},int/.style={dot},]
\node[int] (u0) at (0,-0.5) {};
\node[int] (u1) at (-0.5,0) {};
\node[int] (u2) at (0,0.5) {};
\node[int] (u3) at (0.5,0) {};
\node[int] (u4) at (1.5,0) {};
\node[leaf, label=below:$A$] (A) at (0,-1) {};
\node[leaf, label=left:$B$] (B) at (-1, 0) {};
\node[leaf, label=above:$C$] (C) at (0,1) {};
\node[leaf, label=above right:$D$] (D) at (2,0.87) {};
\node[leaf, label=below right:$E$] (E) at (2,-0.87) {};
\draw[line width=0.8pt] (u3) -- (u0) -- (u1) -- (u2) -- (u3) -- (u4);
\draw[line width=0.8pt] (u0) -- (A);
\draw[line width=0.8pt] (u1) -- (B);
\draw[line width=0.8pt] (u2) -- (C);
\draw[line width=0.8pt] (u4) -- (D);
\draw[line width=0.8pt] (u4) -- (E);
\node[circle,fill=purple,inner sep=1.3pt] (root1) at (-.25,-.25) {};
\node[circle,fill=red,inner sep=1.3pt] (root2) at (.25,-.25) {};
\node[circle,fill=purple,inner sep=1.3pt] (root3) at (-.25,.25) {};
\node[circle,fill=red,inner sep=1.3pt] (root4) at (.25,.25) {};
\node[circle,fill=green,inner sep=1.3pt] (root5) at (0,-.75) {};
\node[circle,fill=green,inner sep=1.3pt] (root6) at (-.75,0) {};
\node[circle,fill=green,inner sep=1.3pt] (root7) at (0,.75) {};
\node[circle,fill=blue,inner sep=1.3pt] at (0.5,0) {};
\end{tikzpicture}
    \caption{The blue node is the hybrid node. When the root is on one of the green nodes, we are missing a single invariant of degree 13, when on one of the purple nodes, we are missing a degree 62 invariant, and on the red nodes we are missing a degree 29 invariant.}
    \label{fig:psuedowitness}
\end{figure}

Case (2) of \Cref{thm:blob-tree-ii} is likely incomplete, though we are unable to prove this rigorously.
While we cannot certify that each of these cases are distinguishable from each other using symbolic calculations, 
we have ample numerical evidence that they are distinct.
We used psuedo-witness sets \cite{HauensteinPsuedoWitnessSets} to compute the degrees of each of these varieties numerically, with
the \texttt{Macaulay2} package \texttt{NumericalImplicitization} \cite{NumericalImplicitizationSource, NumericalImplicitizationArticle}.
The computation results in dividing the 7 networks into 3 classes, based on their degree.
There are 3 of degree 13, 2 of degree 62, and 2 of degree 29, determined by root location as
illustrated in \Cref{fig:psuedowitness}.

There is evidence that these 7 networks are all distinguishable from each other.
We ran the following experiment using the Julia package \texttt{HomotopyContinuation.jl} \cite{HomCont.jl}. 
For each network $\cn$ in \cref{fig:psuedowitness}, we input random real parameters into each of the parameterizations to get a numerical quintet CF vector for each network.
Then we set each of these CF vectors equal to the other parameterizations and attempted to solve this system of equations numerically.
We found no solutions, indicating our randomly chosen CF vectors do not lie
on the other CF varieties.
While this does not constitute a proof, the numerics strongly suggest the root location is identifiable when the hybrid in the 4-cycle is ancestral to $D$ and $E$.

As a last note on this case, our current approach is not feasible for distinguishing these cases. In each case, there is a single polynomial missing of degree 13, 29, or 62, depending on the root. Indeed, it can be seen in \cref{tab:table-blob-tree-ii-ideal-info} that the rank of the Jacobians for each of these networks is 7; moreover, in each of these cases the only invariants we have found are 7 linear ones. We may use these linear relations to reduce the number of variables from 15 to 8. Thus, in each case, we are looking for a prime ideal of dimension 7 in a polynomial ring in 8 variables of degree 13, 29, or 62 (depending on the network). The only such prime ideals are principal.
Finding these missing polynomials via interpolation methods will require truly massive amounts of computation. 

Finally,  while we have not computed the entire vanishing ideals for all 5-cycle networks for the third tree of blobs, we do obtain a complete identifiability result.

\begin{theorem}
\label{thm:blob-tree-iii}
    Let $\cn$ be a rooted level-1 binary network without 2-cycles whose tree of blobs is the tree on five taxa  with a 5-multifurcation, with the 4-blob having circular order shown in \Cref{fig:blob-trees} (iii).  Then the network root and hybrid locations are identifiable.
\end{theorem}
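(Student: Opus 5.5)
By the reductions already made, the hybrid node of the 5-cycle is identifiable from quartet CFs \cite{SNaQ, Banos2019}. Hence the quintet CFs, which marginalize to quartet CFs, also determine it. We may therefore relabel so that $A$ is the hybrid child and the taxa are in the circular order of \Cref{fig:blob-trees} (iii). Hybrid identifiability is then already settled, and what remains is to show that the 9 rooted networks (one for each edge of the semidirected 5-cycle network other than the hybrid edges' pendant structure, i.e.\ the 9 admissible root edges) have pairwise distinct quintet CF varieties for generic parameters. The plan is to do this by exhibiting, for each pair, an algebraic object (an invariant or a matroid circuit) that separates them, without needing full vanishing ideals.

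\textbf{Step 1: parametrize and compute low-degree invariants.} For each of the 9 rootings, compute the quintet CF parametrization with \Cref{alg:SCM}. Then compute the homogenized kernel $\ker\varphi_\cn^{\text{hom}}$ up to degree 4 with \texttt{MultigradedImplicitization}, as in the earlier theorems. Two networks are distinguished as soon as some invariant found for one fails to vanish under the parametrization of the other. Checking this is a direct substitution and needs no Gr\"obner basis. I expect linear invariants alone to separate many pairs. Root location on a tree breaks the symmetries of CFs (compare \Cref{example: quartet-quintet-example}), so rootings on different sides of the cycle should give different sets of linear equalities among the $u_i$. The reflection symmetry of the cycle fixing $A$ (swapping $B\leftrightarrow E$, $C\leftrightarrow D$) pairs up rootings, and within each mirror pair the relabeled invariants should already differ.

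\textbf{Step 2: handle unseparated pairs with matroids or reduced Gr\"obner bases.} For any pair not separated by degree $\le 4$ invariants, first use the shared linear invariants to eliminate variables. Then attempt either a Gr\"obner basis computation in the reduced ring or higher-degree multigraded implicitization. If both fail, compute the algebraic matroids of the two varieties via \Cref{prop: alg-matroid}, using generic numerical evaluations of the Jacobians over $\mathbb{Q}$. Then find a circuit of one matroid that is independent in the other; as noted after \Cref{prop: alg-matroid}, this certifies that the varieties differ.

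Note that only distinguishability is needed here, not the full ideals. Showing the ideals differ suffices, since one variety is then either a proper subvariety of the other or the two meet in lower dimension. In either case the networks are generically distinguishable.

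\textbf{Main obstacle.} I expect the hard cases to be rootings close to one another, such as the two edges adjacent to the hybrid node, or a pendant edge versus its adjacent cycle edges. In these cases the CF varieties may have equal dimension, identical linear invariants, and a single missing high-degree generator, as happened in case (2) of \Cref{thm:blob-tree-ii}. For such pairs the first thing I would try is the matroid comparison, enumerating small subsets (size equal to rank plus one) of the $u_i$ that remain after linear reduction. If matroids coincide, the fallback is to find an invariant of moderate degree for one network by multigraded interpolation restricted to a well-chosen multidegree, and then check that it does not vanish on the other.
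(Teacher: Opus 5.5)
Your plan matches the paper's proof. With the hybrid fixed at $A$ by quartet identifiability, invariants of degree at most $4$ separate the nine rootings except for the pairs $\cn_{14}(v_1),\cn_{14}(v_2)$ and $\cn_{14}(v_4),\cn_{14}(v_5)$ (the two cycle edges meeting where $B$, resp.\ $E$, attaches), and these pairs are then separated by algebraic matroid circuits.

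One small fix to your search heuristic: both varieties in each such pair have dimension $5$, so no $(\mathrm{rank}+1)$-subset can be independent in either matroid. Among those subsets, look instead for a circuit of one matroid that is not a circuit of the other, which is what the paper exhibits.
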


\begin{proof}
    Computing the ideals up to degree 4 distinguished most cases. However matroids were used to distinguish 2 cases, as shown in \Cref{tab:blob-tree-iii-ideal-info} and \Cref{fig:blob-trees-iii-rootings}
    of \Cref{app:TandF}.
\end{proof}

\section{Discussion and Future Work}
\label{sec: discussion}

In this work, we have taken steps 
to understanding the NMSC model via quintet concordance factors.
We introduced a recursive algorithm for computing
symbolic gene tree probabilities under the NMSC model.
Unlike past approaches,
rather than enumerating over all coalescent histories,
our approach is to reduce the network and gene tree by considering partial 
coalescent histories in a single edge.
While in this paper we focused on quintets,
the algorithm and implementation applies to networks and gene trees of arbitrary and differing sizes.
This is particularly useful in many cases
where obtaining the parameterizations by hand would be quite difficult. Additionally, it can be used for either symbolic or numerical calculations.

We applied this framework to all rooted binary level-1 phylogenetic networks on five taxa
without 2-cycles.
The resulting computations show that quintets carry
more information about the rooted network structure than quartets alone can. 
In almost all cases, the topological root location can be determined, up to location in a 3-cycle cluster (the cycle and two pendent edges).
For a network with a fully-resolved tree of blobs,
quintets can detect the presence of a central 3-cycle.
For networks with a 4-cycle, they identify the hybrid node 
and frequently constrain or completely determine the location of the root.
Interestingly, our results reveal non-identifiability phenomena as well. 
A 3-cycle located at either cherry in the tree of blobs may be invisible unless the root lies on the 3-cycle or 
on an edge pendant to the 3-cycle.
Thus, moving from quartets to quintets improves identifiability,
but does not eliminate all problems. We note, however, that there are, as yet, no stronger results on identifiability from unrooted topological gene trees of arbitrary size.

Several questions remain open. 
In many cases, our identifiability results rely on
low degree invariants or information about the underlying algebraic matroids.
It may be useful in the future for inference if the full ideals were better understood. In particular, an exploration of whether specific invariants can be tied to specific topological features of a network is desirable.
Semi-algebraic conditions (polynomial inequalities defining
the image of stochastic parameters) are also needed for complete understanding of identifiability.

The most immediate course of action is, of course, to address 2-cycles. This is needed, for instance, to be able to apply the quintet CF results here to networks on more than 5 taxa, as passing to induced 5-taxon networks can produce 2-cycles even when the full network has none.
We omitted any analysis of 2-cycles from this work, as once they are introduced the class of level-1 networks on 5-taxa becomes infinite, and new arguments are needed to supplement what can be obtained with computation.
While 2-cycles are not identifiable from quartet CFs, with quintets their  story is more complex, as we will develop in a forthcoming work.
Together with the results presented here, 
that study will provide a more full account of 
the rooted and small-cycle structure that can be recovered 
from quintet CFs under the NMSC on level-1 networks.

\section*{Declaration of generative AI and AI-assisted technologies in the manuscript preparation process.}
During the preparation of this work the authors used ChatGPT and Gemini as aids in producing illustrations. The authors reviewed and edited that content as needed and take full responsibility for the published article.

\section*{Acknowledgements}
We thank Dylan Alvarenga and Mary Hopkins who participated in the inception of this project. We also thank Benjamin Hollering for suggesting the use of algebraic matroids.

The work was begun at the Algebra of Phylogenetics Workshop '24 at the University of Hawai`i at M\=anoa funded by NSF grant DMS-1945584 to E. Gross, 
and continued at the Institute for Computational and Experimental Research in Mathematics in Providence, RI, under Grant DMS-1929284  during the Quantitative Phylogenetics  program, as well as during a Collaborate@ICERM visit.
J.A. Rhodes was partially supported by DMS-2051760.

\newcommand{\fiveTaxonTree}[5]{%
\begin{tikzpicture}[
    baseline=(current bounding box.center),
    scale=0.55,
    >=Stealth,
    every node/.style={font=\small},
    dot/.style={circle,fill=black,inner sep=1.3pt},
    leaf/.style={dot},
    int/.style={dot},
]
    \coordinate (u) at (-1,0);
    \coordinate (v) at (0,0);
    \coordinate (w) at (1,0);

    \node[leaf,label=left:$#1$]  (A) at (-1.5,-0.87) {};
    \node[leaf,label=left:$#2$]  (B) at (-1.5,0.87) {};
    \node[leaf,label=above:$#3$] (C) at (0,1) {};
    \node[leaf,label=right:$#4$] (D) at (1.5,0.87) {};
    \node[leaf,label=right:$#5$] (E) at (1.5,-0.87) {};

    \draw[line width=1.3pt]
        (A) -- (u) -- (B)
        (u) -- (v)
        (v) -- (C)
        (v) -- (w)
        (w) -- (D)
        (w) -- (E);
\end{tikzpicture}%
}

\renewcommand{\arraystretch}{1.3}

\appendix
\section{Tables and figures}\label{app:TandF}
\begin{table}[ht]
\centering
\begin{minipage}[t]{0.48\textwidth}
\vspace{0pt}
\centering
\begin{tabular}{c|c|c}
    & Splits & Graph  \\ \hline
    $T_1$ & $\substack{AB|CDE\\ ABC|DE}$ & \fiveTaxonTree{A}{B}{C}{D}{E} \\
    $T_2$ & $\substack{AB|CDE\\ ABD|CE}$ & \fiveTaxonTree{A}{B}{D}{C}{E}\\
    $T_3$ & $\substack{AB|CDE\\ ABE|CD}$ & \fiveTaxonTree{A}{B}{E}{C}{D} \\
    $T_4$ & $\substack{AC|BDE\\ ABC|DE}$ & \fiveTaxonTree{A}{C}{B}{D}{E}\\
    $T_5$ & $\substack{AC|BDE\\ ACD|BE}$ & \fiveTaxonTree{A}{C}{D}{B}{E}\\
    $T_6$ & $\substack{AC|BDE\\ ACE|BD}$ & \fiveTaxonTree{A}{C}{E}{B}{D}\\
    $T_7$ & $\substack{AD|BCE\\ ABD|CE}$ & \fiveTaxonTree{A}{D}{B}{C}{E}\\
    $T_8$ & $\substack{AD|BCE\\ ACD|BE}$ & \fiveTaxonTree{A}{D}{C}{B}{E}
\end{tabular}
\end{minipage}
\hfill
\begin{minipage}[t]{0.48\textwidth}
\vspace{0pt}
\centering
\begin{tabular}{c|c|c}
    & Splits & Graph \\ \hline
    $T_9$ & $\substack{AD|BCE\\ ADE|BC}$ & \fiveTaxonTree{A}{D}{E}{B}{C}\\
    $T_{10}$ & $\substack{AE|BCD\\ ABE|CD}$ & \fiveTaxonTree{A}{E}{B}{C}{D}\\
    $T_{11}$ & $\substack{AE|BCD\\ ACE|BD}$ & \fiveTaxonTree{A}{E}{C}{B}{D}\\
    $T_{12}$ & $\substack{AE|BCD\\ ADE|BC}$ & \fiveTaxonTree{A}{E}{D}{B}{C}\\
    $T_{13}$ & $\substack{BC|ADE\\ ABC|DE}$ & \fiveTaxonTree{B}{C}{A}{D}{E}\\
    $T_{14}$ & $\substack{BD|ACE\\ ABD|CE}$ & \fiveTaxonTree{B}{D}{A}{C}{E}\\
    $T_{15}$ & $\substack{BE|ACD\\ ABE|CD}$ & \fiveTaxonTree{B}{E}{A}{C}{D}
\end{tabular}
\end{minipage}
\caption{Ordering of the 15 gene trees with taxon set $\{A,B,C,D,E\}$.}
\label{tab:gene_tree_ordering}
\end{table}

In this appendix, we have various tables and figures referenced throughout the document.
\Cref{tab:gene_tree_ordering} enumerates the ordering we use on the fifteen 5-taxon gene trees.
\Cref{fig:blob-tree-i-rootings,fig:blob-tree-ii-rootings,fig:blob-trees-iii-rootings} enumerate all rooted level-1 networks under consideration in \Cref{sec: root identifiability}.
\Cref{tab:blob-trees-i-ideals,tab:table-blob-tree-ii-ideal-info,tab:blob-tree-iii-ideal-info} record summaries of our computational results needed to prove \Cref{thm:blob-tree-i,thm:blob-tree-ii,thm:blob-tree-iii}.
Computations verifying the entries in each table can be found on our \href{https://github.com/jcu237/SymbolicCoalescentModel}{GitHub} under the \texttt{ideals} directory.
Specifically, computations supporting \cref{tab:blob-trees-i-ideals,tab:table-blob-tree-ii-ideal-info,tab:blob-tree-iii-ideal-info} can be found in the following respective files.
\begin{itemize}
    \item \texttt{ideals/fullyResolvedBlobTreeIdeals.m2}
    \item \texttt{ideals/partiallyResolvedBlobTreeIdeals.m2}
    \item \texttt{ideals/unresolvedBlobTreeIdeals.m2}
\end{itemize}
\begin{figure}[ht]
    \centering
    \begin{tikzpicture}[
        >=Stealth, 
        every node/.style={font=\small},
        dot/.style={circle,fill=black,inner sep=1.3pt},
        leaf/.style={dot},
        int/.style={dot},
    ]

    \begin{scope}[shift={(0,0)}]
        \node[int] (u) at (0,0) {};
        \node[int] (v) at (1,0) {};
        \node[int] (w) at (2,0) {};
        \node[leaf,label=below left:$A$] (A) at (-0.5,-0.87) {};
        \node[leaf,label=above left:$B$] (B) at (-0.5, 0.87) {};
        \node[leaf,label=above:$C$] (C) at (1,1) {};
        \node[leaf,label=above right:$D$] (D) at (2.5, 0.87) {};
        \node[leaf,label=below right:$E$] (E) at (2.5,-0.87) {};

        \draw[line width=0.8pt] (A) -- node[int,midway] {} node[midway,left] {$v_1$} (u);
        \draw[line width=0.8pt] (B) -- node[int,midway] {} node[midway,left] {$v_2$} (u);
        \draw[line width=0.8pt] (u) -- node[int,midway] {} node[midway,below] {$v_3$} (v);
        \draw[line width=0.8pt] (v) -- node[int,midway] {} node[midway,left] {$v_4$} (C);
        \draw[line width=0.8pt] (v) -- node[int,midway] {} node[midway,below] {$v_5$} (w);
        \draw[line width=0.8pt] (w) -- node[int,midway] {} node[midway,right] {$v_6$} (D);
        \draw[line width=0.8pt] (w) -- node[int,midway] {} node[midway,right] {$v_7$} (E);

        \node[below] at (1,-1) {$\cn_1$};
    \end{scope}

    \begin{scope}[shift={(5,0)}]
        \node[int] (u) at (0,0) {};
        \node[int] (v) at (1,0) {};
        \node[int] (w1) at (1.67,0) {};
        \node[int] (w2) at (2.17,0.29) {};
        \node[int] (w3) at (2.17,-0.29) {};
        \node[leaf, label=below left:$A$] (A) at (-.5, -0.86) {};
        \node[leaf, label=above left:$B$] (B) at (-.5, 0.86) {};
        \node[leaf, label=above:$C$] (C) at (1,1) {};
        \node[leaf, label=above right:$D$] (D) at (2.5,0.87) {};
        \node[leaf, label=below right:$E$] (E) at (2.5,-0.87) {};

        \draw[line width=0.8pt] (A) -- (u); 
        \draw[line width=0.8pt] (u) -- (B);
        \draw[line width=0.8pt] (u) -- (v); 
        \draw[line width=0.8pt] (v) -- (w1);
        \draw[line width=0.8pt] (C) -- (v);
        \draw[line width=0.8pt] (w1) -- node[int,midway] {} node[midway,above] {$v_1$} (w2);
        \draw[line width=0.8pt] (w2) -- node[int,midway] {} node[midway,right] {$v_2$} (w3);
        \draw[line width=0.8pt] (w3) -- node[int,midway] {} node[midway,below] {$v_3$} (w1);
        \draw[line width=0.8pt] (w2) -- node[int,midway] {} node[midway,right] {$v_4$} (D);
        \draw[line width=0.8pt] (w3) -- node[int,midway] {} node[midway,right] {$v_5$} (E);

        \node[below] at (1,-1) {$\cn_2$};
    \end{scope}

    \begin{scope}[shift={(10,0)}]
        \node[int] (u3) at (0.33,0) {};
        \node[int] (u1) at (-0.17,-0.28) {};
        \node[int] (u2) at (-0.17,0.28) {};
        \node[int] (v) at (1,0) {};
        \node[int] (w) at (2,0) {};
        \node[leaf, label=below left:$A$] (A) at (-.5, -0.86) {};
        \node[leaf, label=above left:$B$] (B) at (-.5, 0.86) {};
        \node[leaf, label=above:$C$] (C) at (1,1) {};
        \node[leaf, label=above right:$D$] (D) at (2.5,0.87) {};
        \node[leaf, label=below right:$E$] (E) at (2.5,-0.87) {};

        \draw[line width=0.8pt] (u1) -- node[int,midway] {} node[midway,left] {$v_3$} (u2);
        \draw[line width=0.8pt] (u2) -- node[int,midway] {} node[midway,above] {$v_4$} (u3);
        \draw[line width=0.8pt] (u3) -- node[int,midway] {} node[midway,below] {$v_5$} (u1);
        \draw[line width=0.8pt] (A) -- node[int,midway] {} node[midway,left] {$v_1$} (u1);
        \draw[line width=0.8pt] (B) -- node[int,midway] {} node[midway,left] {$v_2$} (u2);
        \draw[line width=0.8pt] (u3) -- (v);
        \draw[line width=0.8pt] (v) -- (w);
        \draw[line width=0.8pt] (C) -- (v);
        \draw[line width=0.8pt] (D) -- (w);
        \draw[line width=0.8pt] (w) -- (E);

        \node[below] at (1,-1) {$\cn_3$};
    \end{scope}

    \begin{scope}[shift={(0,-4)}]
        \node[int] (u) at (0,0) {};
        \node[int] (v1) at (0.67,0) {};
        \node[int] (v2) at (1,0.5) {};
        \node[int] (v3) at (1.33,0) {};
        \node[int] (w) at (2,0) {};
        \node[leaf, label=below left:$A$] (A) at (-.5, -0.86) {};
        \node[leaf, label=above left:$B$] (B) at (-.5, 0.86) {};
        \node[leaf, label=above:$C$] (C) at (1,1) {};
        \node[leaf, label=above right:$D$] (D) at (2.5,0.87) {};
        \node[leaf, label=below right:$E$] (E) at (2.5,-0.87) {};

        \draw[line width=0.8pt] (A) -- node[int,midway] {} node[midway,left] {$v_1$} (u);
        \draw[line width=0.8pt] (u) -- node[int,midway] {} node[midway,left] {$v_2$} (B);
        \draw[line width=0.8pt] (u) -- node[int,midway] {} node[midway,below] {$v_3$} (v1);
        \draw[line width=0.8pt] (v1) -- node[int,midway] {} node[midway,below] {$v_4$} (v3);
        \draw[line width=0.8pt] (v3) -- node[int,midway] {} node[midway,right] {$v_5$} (v2);
        \draw[line width=0.8pt] (v2) -- node[int,midway] {} node[midway,left] {$v_6$} (v1);
        \draw[line width=0.8pt] (C) -- node[int,midway] {} node[midway,left] {$v_7$} (v2);
        \draw[line width=0.8pt] (v3) -- node[int,midway] {} node[midway,below] {$v_8$} (w);
        \draw[line width=0.8pt] (D) -- node[int,midway] {} node[midway,right] {$v_9$} (w);
        \draw[line width=0.8pt] (w) -- node[int,midway] {} node[midway,right] {$v_{10}$} (E);

        \node[below] at (1,-1) {$\cn_4$};
    \end{scope}

    \begin{scope}[shift={(5,-4)}]
        \node[int] (u) at (0,0) {};
        \node[int] (v1) at (0.67,0) {};
        \node[int] (v2) at (1,0.5) {};  
        \node[int] (v3) at (1.33,0) {};
        \node[int] (w1) at (1.67,0) {};
        \node[int] (w2) at (2.17,0.28) {};
        \node[int] (w3) at (2.17,-0.28) {};
        \node[leaf, label=below left:$A$] (A) at (-.5, -0.86) {};
        \node[leaf, label=above left:$B$] (B) at (-.5, 0.86) {};
        \node[leaf, label=above:$C$] (C) at (1,1) {};
        \node[leaf, label=above right:$D$] (D) at (2.5,0.87) {};
        \node[leaf, label=below right:$E$] (E) at (2.5,-0.87) {};

        \draw[line width=0.8pt] (A) -- (u);
        \draw[line width=0.8pt] (u) -- (B);
        \draw[line width=0.8pt] (u) -- (v1);
        \draw[line width=0.8pt] (v1) -- (v2);
        \draw[line width=0.8pt] (v2) -- (v3);
        \draw[line width=0.8pt] (v3) -- (v1);
        \draw[line width=0.8pt] (C) -- (v2);
        \draw[line width=0.8pt] (v3) -- (w1);
        \draw[line width=0.8pt] (w1) -- node[int,midway] {} node[midway,above] {$v_1$} (w2);
        \draw[line width=0.8pt] (w2) -- node[int,midway] {} node[midway,right] {$v_2$} (w3);
        \draw[line width=0.8pt] (w3) -- node[int,midway] {} node[midway,below] {$v_3$} (w1);
        \draw[line width=0.8pt] (D) -- node[int,midway] {} node[midway,right] {$v_4$} (w2);
        \draw[line width=0.8pt] (w3) -- node[int,midway] {} node[midway,right] {$v_5$} (E);

        \node[below] at (1,-1) {$\cn_5$};
    \end{scope}

    \begin{scope}[shift={(10,-4)}]
        \node[int] (u3) at (0.33,0) {};
        \node[int] (u1) at (-0.17,-0.28) {};
        \node[int] (u2) at (-0.17,0.28) {};
        \node[int] (v1) at (0.67,0) {};
        \node[int] (v2) at (1.33,0) {};
        \node[int] (v3) at (1,0.5) {};
        \node[int] (w) at (2,0) {};
        \node[leaf, label=below left:$A$] (A) at (-.5, -0.86) {};
        \node[leaf, label=above left:$B$] (B) at (-.5, 0.86) {};
        \node[leaf, label=above:$C$] (C) at (1,1) {};
        \node[leaf, label=above right:$D$] (D) at (2.5,0.87) {};
        \node[leaf, label=below right:$E$] (E) at (2.5,-0.87) {};
        
        \draw[line width=0.8pt] (u1) -- node[int,midway] {} node[midway,left] {$v_3$} (u2);
        \draw[line width=0.8pt] (u2)-- node[int,midway] {} node[midway,above] {$v_4$} (u3);
        \draw[line width=0.8pt] (u3) -- node[int,midway] {} node[midway,below] {$v_5$} (u1);
        \draw[line width=0.8pt] (A) -- node[int,midway] {} node[midway,left] {$v_1$} (u1);
        \draw[line width=0.8pt] (B) -- node[int,midway] {} node[midway,left] {$v_2$} (u2);
        \draw[line width=0.8pt] (u3) -- (v1);
        \draw[line width=0.8pt] (v1) -- (v2);
        \draw[line width=0.8pt] (v2) -- (v3);
        \draw[line width=0.8pt] (v3) -- (v1);
        \draw[line width=0.8pt] (C) -- (v3);
        \draw[line width=0.8pt] (v2) -- (w);
        \draw[line width=0.8pt] (D) -- (w);
        \draw[line width=0.8pt] (w) -- (E);

        \node[below] at (1,-1) {$\cn_6$};
    \end{scope}
    \end{tikzpicture}
    \caption{These six networks above are representatives of the networks with Tree of Blobs from \Cref{fig:blob-trees} (i) that are identifiable using quintet CFs.
    They are undirected; however, their possible roots are labelled on each graph. The possible hybrid node locations are not labelled since in any 3-cycle, once a root is chosen,
    the hybrid has two or three possible locations, none of which are identifiable using quintet CFs. Note that all cases where a 3-cycle is located on a cherry are omitted.}
    \label{fig:blob-tree-i-rootings}
\end{figure}

\begin{table}[ht]
    \centering
    \begin{tabular}{c|c|c|c}
        Networks & $\dim(V_\cn)$ & $\deg(V_\cn)$  & $(d_1,d_2,d_3)$ \\ \hline
        $\cn_1(v_1)$ & 3 & 4 & $(10,1,3)$ \\
        $\cn_1(v_2)$ & 3 & 4 & $(10,1,3)$ \\
        $\cn_1(v_3)$ & 3 & 2 & $(11,1,0)$ \\
        $\cn_1(v_4)$ & 3 & 2 & $(11,1,0)$ \\
        $\cn_1(v_5)$ & 3 & 2 & $(11,1,0)$ \\
        $\cn_1(v_6)$ & 3 & 4 & $(10,1,3)$ \\
        $\cn_1(v_7)$ & 3 & 4 & $(10,1,3)$ \\
        $\{\cn_2(v_i)~|~1\leq i\leq 5\}$ & 4 & 2 & $(10,1,0)$ \\
        $\{\cn_3(v_i)~|~1\leq i\leq 5\}$ & 4  & 2 & $(10,1,0)$ \\
        $\cn_4(v_3)$ & 4 & 1 & $(11,0,0)$ \\
        $\{\cn_4(v_i)~|~4\leq i \leq 7\}$ & 5 & 1 & $(10,0,0)$ \\
        $\cn_4(v_8)$ & 4 & 1 & $(11,0,0)$\\
        $\substack{\{\cn_4(v_i) ~|~ i=9,10\}\\ \{\cn_5(v_i) ~|~ 1 \leq i\leq 5\}}$ & 5 & 1 & $(10,0,0)$ \\
        $\substack{\{\cn_4(v_i)~|~i=1,2\}\\\{\cn_6(v_i) ~|~ 1 \leq i\leq 5\}}$ & 5 & 1 & $(10,0,0)$ \\
    \end{tabular}
    \caption{Each row gives a collection of networks with a common and unique quintet CF ideal whose generators can be taken to have degree at most 3. The dimension and degree of the CF varieties are recorded. The last column records the minimal number of invariants of degree $i$ needed to generate the ideal where $d_i$ is the number of degree $i$ invariants needed.}
    \label{tab:blob-trees-i-ideals}
\end{table}

\begin{figure}[ht]
    \centering
    \begin{tikzpicture}[
        >=Stealth, 
        every node/.style={font=\small},
        dot/.style={circle,fill=black,inner sep=1.3pt},
        leaf/.style={dot},
        int/.style={dot},
    ]

    \begin{scope}[shift={(0,0)}]
        \node[circle,fill=blue,inner sep=1.3pt] (u0) at (0,-0.5) {};
        \node[int] (u1) at (-0.5,0) {};
        \node[int] (u2) at (0,0.5) {};
        \node[int] (u3) at (0.5,0) {};
        \node[int] (u4) at (2,0) {};
        \node[leaf, label=below:$A$] (A) at (0,-1) {};
        \node[leaf, label=left:$B$] (B) at (-1.5, 0) {};
        \node[leaf, label=above:$C$] (C) at (0,1.5) {};
        \node[leaf, label=above right:$D$] (D) at (2.5,0.87) {};
        \node[leaf, label=below right:$E$] (E) at (2.5,-0.87) {};
        
        \draw[line width=0.8pt] (u3) -- node[int,midway] {} node[midway,below right] {$v_6$} (u0);
        \draw[line width=0.8pt] (u0) -- node[int,midway] {} node[midway,below left] {$v_7$} (u1);
        \draw[line width=0.8pt] (u1) -- node[int,midway] {} node[midway,above left] {$v_8$} (u2);
        \draw[line width=0.8pt] (u2) -- node[int,midway] {} node[midway,above right] {$v_9$} (u3);
        \draw[line width=0.8pt] (u3) -- node[int,midway] {} node[midway,below] {$v_{10}$} (u4);
        \draw[line width=0.8pt] (u0) -- (A);
        \draw[line width=0.8pt] (u1) -- node[int,midway] {} node[midway,above] {$v_2$} (B);
        \draw[line width=0.8pt] (u2) -- node[int,midway] {} node[midway,left] {$v_3$} (C);
        \draw[line width=0.8pt] (u4) -- node[int,midway] {} node[midway,right] {$v_4$} (D);
        \draw[line width=0.8pt] (u4) -- node[int,midway] {} node[midway,right] {$v_5$} (E);

        \node[below] at (1,-1.5) {$\cn_7$};
    \end{scope}

    \begin{scope}[shift={(6,0)}]
        \node[int] (u0) at (0,-0.5) {};
        \node[circle,fill=blue,inner sep=1.3pt] (u1) at (-0.5,0) {};
        \node[int] (u2) at (0,0.5) {};
        \node[int] (u3) at (0.5,0) {};
        \node[int] (u4) at (2,0) {};
        \node[leaf, label=below:$A$] (A) at (0,-1.5) {};
        \node[leaf, label=left:$B$] (B) at (-1, 0) {};
        \node[leaf, label=above:$C$] (C) at (0,1.5) {};
        \node[leaf, label=above right:$D$] (D) at (2.5,0.87) {};
        \node[leaf, label=below right:$E$] (E) at (2.5,-0.87) {};
        
        \draw[line width=0.8pt] (u3) -- node[int,midway] {} node[midway,below right] {$v_6$} (u0);
        \draw[line width=0.8pt] (u0) -- node[int,midway] {} node[midway,below left] {$v_7$} (u1);
        \draw[line width=0.8pt] (u1) -- node[int,midway] {} node[midway,above left] {$v_8$} (u2);
        \draw[line width=0.8pt] (u2) -- node[int,midway] {} node[midway,above right] {$v_9$} (u3);
        \draw[line width=0.8pt] (u3) -- node[int,midway] {} node[midway,below] {$v_{10}$} (u4);
        \draw[line width=0.8pt] (u0) -- node[int,midway] {} node[midway,left] {$v_1$} (A);
        \draw[line width=0.8pt] (u1) -- (B);
        \draw[line width=0.8pt] (u2) -- node[int,midway] {} node[midway,left] {$v_3$} (C);
        \draw[line width=0.8pt] (u4) -- node[int,midway] {} node[midway,right] {$v_4$} (D);
        \draw[line width=0.8pt] (u4) -- node[int,midway] {} node[midway,right] {$v_5$} (E);
        \node[below] at (1,-1.5) {$\cn_8$};
    \end{scope}

    \begin{scope}[shift={(12,0)}]
        \node[int] (u0) at (0,-0.5) {};
        \node[int] (u1) at (-0.5,0) {};
        \node[circle,fill=blue,inner sep=1.3pt] (u2) at (0,0.5) {};
        \node[int] (u3) at (0.5,0) {};
        \node[int] (u4) at (2,0) {};
        \node[leaf, label=below:$A$] (A) at (0,-1.5) {};
        \node[leaf, label=left:$B$] (B) at (-1.5, 0) {};
        \node[leaf, label=above:$C$] (C) at (0,1) {};
        \node[leaf, label=above right:$D$] (D) at (2.5,0.87) {};
        \node[leaf, label=below right:$E$] (E) at (2.5,-0.87) {};
        
        \draw[line width=0.8pt] (u3) -- node[int,midway] {} node[midway,below right] {$v_6$} (u0);
        \draw[line width=0.8pt] (u0) -- node[int,midway] {} node[midway,below left] {$v_7$} (u1);
        \draw[line width=0.8pt] (u1) -- node[int,midway] {} node[midway,above left] {$v_8$} (u2);
        \draw[line width=0.8pt] (u2) -- node[int,midway] {} node[midway,above right] {$v_9$} (u3);
        \draw[line width=0.8pt] (u3) -- node[int,midway] {} node[midway,below] {$v_{10}$} (u4);
        \draw[line width=0.8pt] (u0) -- node[int,midway] {} node[midway,left] {$v_1$} (A);
        \draw[line width=0.8pt] (u1) -- node[int,midway] {} node[midway,above] {$v_2$} (B);
        \draw[line width=0.8pt] (u2) -- (C);
        \draw[line width=0.8pt] (u4) -- node[int,midway] {} node[midway,right] {$v_4$} (D);
        \draw[line width=0.8pt] (u4) -- node[int,midway] {} node[midway,right] {$v_5$} (E);
        \node[below] at (1,-1.5) {$\cn_9$};
    \end{scope}

    \begin{scope}[shift={(0,-5)}]
        \node[int] (u0) at (0,-0.5) {};
        \node[int] (u1) at (-0.5,0) {};
        \node[int] (u2) at (0,0.5) {};
        \node[circle,fill=blue,inner sep=1.3pt] (u3) at (0.5,0) {};
        \node[int] (u4) at (1.5,0) {};
        \node[leaf, label=below:$A$] (A) at (0,-1.5) {};
        \node[leaf, label=left:$B$] (B) at (-1.5, 0) {};
        \node[leaf, label=above:$C$] (C) at (0,1.5) {};
        \node[leaf, label=above right:$D$] (D) at (2,0.87) {};
        \node[leaf, label=below right:$E$] (E) at (2,-0.87) {};
        
        \draw[line width=0.8pt] (u3) -- node[int,midway] {} node[midway,below right] {$v_6$} (u0);
        \draw[line width=0.8pt] (u0) -- node[int,midway] {} node[midway,below left] {$v_7$} (u1);
        \draw[line width=0.8pt] (u1) -- node[int,midway] {} node[midway,above left] {$v_8$} (u2);
        \draw[line width=0.8pt] (u2) -- node[int,midway] {} node[midway,above right] {$v_9$} (u3);
        \draw[line width=0.8pt] (u3) -- (u4);
        \draw[line width=0.8pt] (u0) -- node[int,midway] {} node[midway,left] {$v_1$} (A);
        \draw[line width=0.8pt] (u1) -- node[int,midway] {} node[midway,above] {$v_2$} (B);
        \draw[line width=0.8pt] (u2) -- node[int,midway] {} node[midway,left] {$v_3$} (C);
        \draw[line width=0.8pt] (u4) -- (D);
        \draw[line width=0.8pt] (u4) -- (E);
        \node[below] at (1,-1.5) {$\cn_{10}$};
    \end{scope}

    \begin{scope}[shift={(6,-5)}]
        \node[circle,fill=blue,inner sep=1.3pt] (u0) at (0,-0.5) {};
        \node[int] (u1) at (-0.5,0) {};
        \node[int] (u2) at (0,0.5) {};
        \node[int] (u3) at (0.5,0) {};
        \node[int] (u4) at (1.17,0) {};
        \node[int] (u5) at (1.66,0.29) {};
        \node[int] (u6) at (1.66,-0.29) {};
        \node[leaf, label=below:$A$] (A) at (0,-1) {};
        \node[leaf, label=left:$B$] (B) at (-1, 0) {};
        \node[leaf, label=above:$C$] (C) at (0,1) {};
        \node[leaf, label=above right:$D$] (D) at (2,0.87) {};
        \node[leaf, label=below right:$E$] (E) at (2,-0.87) {};
        \draw[line width=0.8pt] (u3) -- (u0);
        \draw[line width=0.8pt] (u0) -- (u1);
        \draw[line width=0.8pt] (u1) -- (u2);
        \draw[line width=0.8pt] (u2) -- (u3);
        \draw[line width=0.8pt] (u3) -- (u4);
        \draw[line width=0.8pt] (u4) -- node[int,midway] {} node[midway,above left] {$v_1$} (u5);
        \draw[line width=0.8pt] (u5) -- node[int,midway] {} node[midway,right] {$v_2$} (u6);
        \draw[line width=0.8pt] (u6) -- node[int,midway] {} node[midway,below left] {$v_3$} (u4);
        \draw[line width=0.8pt] (u0) -- (A);
        \draw[line width=0.8pt] (u1) -- (B);
        \draw[line width=0.8pt] (u2) -- (C);
        \draw[line width=0.8pt] (u5) -- node[int,midway] {} node[midway,right] {$v_4$} (D);
        \draw[line width=0.8pt] (u6) -- node[int,midway] {} node[midway,right] {$v_5$} (E);
        \node[below] at (1,-1.5) {$\cn_{11}$};
    \end{scope}

        \begin{scope}[shift={(12,-5)}]
        \node[int] (u0) at (0,-0.5) {};
        \node[circle,fill=blue,inner sep=1.3pt] (u1) at (-0.5,0) {};
        \node[int] (u2) at (0,0.5) {};
        \node[int] (u3) at (0.5,0) {};
        \node[int] (u4) at (1.17,0) {};
        \node[int] (u5) at (1.66,0.29) {};
        \node[int] (u6) at (1.66,-0.29) {};
        \node[leaf, label=below:$A$] (A) at (0,-1) {};
        \node[leaf, label=left:$B$] (B) at (-1, 0) {};
        \node[leaf, label=above:$C$] (C) at (0,1) {};
        \node[leaf, label=above right:$D$] (D) at (2,0.87) {};
        \node[leaf, label=below right:$E$] (E) at (2,-0.87) {};
        \draw[line width=0.8pt] (u3) -- (u0);
        \draw[line width=0.8pt] (u0) -- (u1);
        \draw[line width=0.8pt] (u1) -- (u2);
        \draw[line width=0.8pt] (u2) -- (u3);
        \draw[line width=0.8pt] (u3) -- (u4);
        \draw[line width=0.8pt] (u4) -- node[int,midway] {} node[midway,above left] {$v_1$} (u5);
        \draw[line width=0.8pt] (u5) -- node[int,midway] {} node[midway,right] {$v_2$} (u6);
        \draw[line width=0.8pt] (u6) -- node[int,midway] {} node[midway,below left] {$v_3$} (u4);
        \draw[line width=0.8pt] (u0) -- (A);
        \draw[line width=0.8pt] (u1) -- (B);
        \draw[line width=0.8pt] (u2) -- (C);
        \draw[line width=0.8pt] (u5) -- node[int,midway] {} node[midway,right] {$v_4$} (D);
        \draw[line width=0.8pt] (u6) -- node[int,midway] {} node[midway,right] {$v_5$} (E);
        \node[below] at (1,-1.5) {$\cn_{12}$};
    \end{scope}

    \begin{scope}[shift={(6,-10)}]
        \node[int] (u0) at (0,-0.5) {};
        \node[int] (u1) at (-0.5,0) {};
        \node[circle,fill=blue,inner sep=1.3pt] (u2) at (0,0.5) {};
        \node[int] (u3) at (0.5,0) {};
        \node[int] (u4) at (1.17,0) {};
        \node[int] (u5) at (1.66,0.29) {};
        \node[int] (u6) at (1.66,-0.29) {};
        \node[leaf, label=below:$A$] (A) at (0,-1) {};
        \node[leaf, label=left:$B$] (B) at (-1, 0) {};
        \node[leaf, label=above:$C$] (C) at (0,1) {};
        \node[leaf, label=above right:$D$] (D) at (2,0.87) {};
        \node[leaf, label=below right:$E$] (E) at (2,-0.87) {};
        \draw[line width=0.8pt] (u3) -- (u0);
        \draw[line width=0.8pt] (u0) -- (u1);
        \draw[line width=0.8pt] (u1) -- (u2);
        \draw[line width=0.8pt] (u2) -- (u3);
        \draw[line width=0.8pt] (u3) -- (u4);
        \draw[line width=0.8pt] (u4) -- node[int,midway] {} node[midway,above left] {$v_1$} (u5);
        \draw[line width=0.8pt] (u5) -- node[int,midway] {} node[midway,right] {$v_2$} (u6);
        \draw[line width=0.8pt] (u6) -- node[int,midway] {} node[midway,below left] {$v_3$} (u4);
        \draw[line width=0.8pt] (u0) -- (A);
        \draw[line width=0.8pt] (u1) -- (B);
        \draw[line width=0.8pt] (u2) -- (C);
        \draw[line width=0.8pt] (u5) -- node[int,midway] {} node[midway,right] {$v_4$} (D);
        \draw[line width=0.8pt] (u6) -- node[int,midway] {} node[midway,right] {$v_5$} (E);
        \node[below] at (1,-1.5) {$\cn_{13}$};
    \end{scope}
    
    \end{tikzpicture}
    \caption{These seven networks above are representatives of the networks with Tree of Blobs from \Cref{fig:blob-trees} (ii) that are identifiable using quintet CFs.
    The possible root locations are marked in each, and the hybrid location in the 4-cycle is marked in blue.
    Note that not all rootings are listed on networks 11-14 since these CF varieties are already covered in previous cases.}
    \label{fig:blob-tree-ii-rootings}
\end{figure}

\begin{table}[ht]
    \centering
    \begin{tabular}[]{c|c|c|c|c}
        Networks & $\dim(V_\cn)$  & $\deg(V_\cn)$  & $(d_1,d_2,\dotsc)$ & Circuits \\ \hline
        $\cn_7(v_2)$ & 5 & $20^{(n)}$  &  $(7,1,0,4,\dots)$ & $-$\\
        $\cn_7(v_3)$ & 5 & 6  &  $(8,1,0,3)$ & $-$\\
        $\cn_7(v_4)$ & 5 & 6  &  $(8,1,0,3)$ & $-$\\
        $\cn_7(v_5)$ & 5 & 6  &  $(8,1,0,3)$ & $-$\\
        $\cn_7(v_6)$ & 5 & 8  &  $(8,1,0,0,3)$ & $-$\\
        $\cn_7(v_7), \cn_7(v_8)$ & 5 & $48^{(n)}, 30^{(n)}$  & $(7,1,0,3,\dotsc)$ & $\substack{\{4,6,10,11,12,13\}\\ \in \mathcal{C}(M_8) \setminus \mathcal{C}(M_7)}$\\
        $\cn_7(v_9)$ & 5 & 11  & $(8,1,0,0,0,0,4)$ & $-$\\
        $\cn_7(v_{10})$ & 5 & 2  & $(9,1,0,0)$ & $-$\\
        
        $\cn_8(v_1),\cn_8(v_6), \cn_8(v_7)$ & 5  & 3  & $(8,3,0,0)$ & $-$\\
        $\cn_8(v_3),\cn_8(v_8),\cn_8(v_9) $ & 5 & 3 & $(8,3,0,0)$ & $-$\\
        $\cn_8(v_{10})$ & 4 & 3  & $(9,3,0,0)$ & $-$ \\
        $\cn_8(v_4) $ & 4 & 5 & $(8,3,6,0)$ & $-$\\
        $\cn_8(v_5) $ & 4 & 5 & $(8,3,6,0)$ & $-$ \\

        $\cn_9(v_1)$ & 5 & 6 & $(8,1,0,3)$ & $-$ \\
        $\cn_9(v_2)$ & 5 & $20^{(n)}$ & $(7,1,0,4,\dotsc)$ & $-$ \\
        $\cn_9(v_4)$ & 5 & 6 & $(8,1,0,3)$ & $-$ \\
        $\cn_9(v_5)$ & 5 & 6 & $(8,1,0,3)$ & $-$ \\
        $\cn_9(v_6)$ & 5 & 11 & $(8,1,0,0,0,0,4)$ & $-$ \\
        $\cn_9(v_7),\cn_9(v_8)$ & 5 & $30^{(n)}, 48^{(n)}$ & $(7,1,0,3,\dotsc)$ & $\substack{\{1,3,4,6,10,11\}\\ \in \mathcal{C}(M_7) \setminus \mathcal{C}(M_8)}$ \\
        $\cn_9(v_9)$ & 5 & 8 & $(8,2,0,0,3)$ & $-$ \\
        $\cn_9(v_{10})$ & 5 & 2 & $(9,1,0,0)$ & $-$ \\

        $\{\cn_{10}(v_i) ~|~\substack{1 \leq i \leq 3 \\ 6 \leq i \leq 9}\}$ & 7 & $\substack{13^{(n)},13^{(n)},13^{(n)},\\29^{(n)},62^{(n)},62^{(n)},29^{(n)}}$ & $(7,0,0,0,\dotsc)$ & $-$ \\

        $\{\cn_{11}(v_i) ~|~ 1\leq i \leq 5\}$ & 6 & 2 & $(8,1,0,0)$ & $-$ \\
        $\{\cn_{12}(v_i) ~|~ 1\leq i \leq 5\}$ & 5 & 3 & $(8,3,0,0)$ & $-$ \\
        $\{\cn_{13}(v_i) ~|~ 1\leq i \leq 5\}$ & 6 & 2 & $(8,1,0,0)$ & $-$ \\
        
    \end{tabular}
    \caption{Each row gives a collection of networks with a common and unique quintet CF ideal. The dimension and degree of the CF varieties are recorded. The the superscript $(n)$ indicates that the degree was computed numerically. The fourth column records the minimal number of invariants of degree $i$ needed to generate the ideal where $d_i$ is the number of degree $i$ invariants. An entry containing ``$\dotsc$'' indicates that the full vanishing ideal is not yet known. The final column records distinguishing circuits in the algebraic matroids if these were used to distinguish the networks. }
    \label{tab:table-blob-tree-ii-ideal-info}
\end{table}

\begin{figure}[ht]
    \centering
    \begin{tikzpicture}[
        >=Stealth, 
        every node/.style={font=\small},
        dot/.style={circle,fill=black,inner sep=1.3pt},
        leaf/.style={dot},
        int/.style={dot},
    ]

    \node[circle,fill=blue,inner sep=1.3pt] (u0) at (1,0) {};
    \node[int] (u1) at (0.31,0.95) {};
    \node[int] (u2) at (-0.81,0.59) {};
    \node[int] (u3) at (-0.81,-0.59) {};
    \node[int] (u4) at (0.31,-0.95) {};

    \node[leaf, label=right:$A$] (A) at (2,0) {};
    \node[leaf, label=above:$B$] (B) at (0.62, 1.9) {};
    \node[leaf, label=above left:$C$] (C) at (-1.62,1.18) {};
    \node[leaf, label=below left:$D$] (D) at (-1.62,-1.18) {};
    \node[leaf, label=below:$E$] (E) at (0.62,-1.9) {};

    \draw[line width=0.8pt] (u0) -- node[int,midway] {} node[midway,right] {$v_1$} (u1);
    \draw[line width=0.8pt] (u1) -- node[int,midway] {} node[midway,above] {$v_2$} (u2);
    \draw[line width=0.8pt] (u2) -- node[int,midway] {} node[midway,left] {$v_3$} (u3);
    \draw[line width=0.8pt] (u3) -- node[int,midway] {} node[midway,above] {$v_4$} (u4);
    \draw[line width=0.8pt] (u0) -- node[int,midway] {} node[midway,right] {$v_5$} (u4);

    \draw[line width=0.8pt] (u0) -- (A);
    \draw[line width=0.8pt] (u1) -- node[int,midway] {} node[midway,above left] {$v_6$} (B);
    \draw[line width=0.8pt] (u2) -- node[int,midway] {} node[midway,above] {$v_7$} (C);
    \draw[line width=0.8pt] (u3) -- node[int,midway] {} node[midway,above left] {$v_8$} (D);
    \draw[line width=0.8pt] (u4) -- node[int,midway] {} node[midway,left] {$v_9$} (E);

    \node[below] at (0,-2.5) {$\cn_{14}$};
    \end{tikzpicture}
    \caption{Theese are the 9 networks with tree of blobs (iii) under consideration. The hybrid is always ancestral to $A$, and the root can be placed at any one of the $v_i$'s.}
    \label{fig:blob-trees-iii-rootings}
\end{figure}

\begin{table}[ht]
    \centering
    \begin{tabular}{c|c|c|c|c}
        Networks & $\dim(V_\cn)$ & $\deg(V_\cn)$ & $(d_1,d_2,\dotsc)$ & Circuits \\ \hline
        $\cn_{14}(v_1),\cn_{14}(v_2)$ & 5 & $35^{(n)}$, $20^{(n)}$ & $(6,3,6,0,\dotsc)$ & $\substack{\{6,8,9,10,11,13\}\\ \in \mathcal{C}(M_2) \setminus \mathcal{C}(M_1)}$ \\
        $\cn_{14}(v_3)$ & 5 & 9 & $(7,3,0,0,10,1,1)$ & $-$ \\
        $\cn_{14}(v_4),\cn_{14}(v_5)$ & 5 & $20^{(n)}$, $35^{(n)}$ & $(6,3,6,0,\dotsc)$ & $\substack{\{1,3,5,6,8,9\}\\ \in\mathcal{C}(M_4) \setminus \mathcal{C}(M_5)}$\\
        $\cn_{14}(v_6)$ & 5 & $15^{(n)}$ & $(6,3,7,0,\dotsc)$ & $-$ \\
        $\cn_{14}(v_7)$ & 5 & 5 & $(7,3,6,0)$ & $-$\\
        $\cn_{14}(v_8)$ & 5 & 5 & $(7,3,6,0)$ & $-$ \\
        $\cn_{14}(v_9)$ & 5 & $15^{(n)}$ & $(6,3,7,0,\dotsc)$ & $-$
    \end{tabular}
    \caption{Each row gives a collection of networks with a common and unique quintet CF ideal. The dimension and degree of the CF varieties are recorded. The the superscript $(n)$ indicates that the degree was computed numerically. The fourth column records the minimal number of invariants of degree $i$ needed to generate the ideal where $d_i$ is the number of degree $i$ invariants. An entry containing ``$\dotsc$'' indicates that the full vanishing ideal is not yet known. The final column records distinguishing circuits in the algebraic matroids if these were used to distinguish the networks. 
     Finally, for the networks in rows 4 and 7, it is possible the full ideals were computed. The ideals generated have the correct dimension and the degrees of the partially generated ideals match the numerically computed degrees; however, the primality test failed to finish, so we were not able to verify the full vanishing ideal was computed. 
    }
    \label{tab:blob-tree-iii-ideal-info}
\end{table}

\clearpage

\bibliography{Hybridization}
\bibliographystyle{alpha}
\end{document}